\renewcommand\tableofcontents{\listoftoc*{toc}} 
\author[1]{Piotr Faliszewski}
\author[2]{Martin Lackner}
\author[3]{\authorcr Krzysztof Sornat} 
\author[1]{Stanisław Szufa}
\affil[1]{AGH University, Poland} 
\affil[2]{TU Wien, Austria} 
\affil[3]{IDSIA, USI-SUPSI, Switzerland}
\newtheorem{proposition}{Proposition}
\newtheorem{claim}{Claim}
\newtheorem{example}{Example}
\newtheorem{definition}{Definition}
\newtheorem{lemma}{Lemma}
\newtheorem{theorem}{Theorem}
\newcommand{\disc}{{\mathrm{disc}}}
\newcommand{\jacc}{{\mathrm{jac}}}
\newcommand{\hamm}{{\mathrm{ham}}}
\newcommand{\normhamm}{{\mathrm{nham}}}
\newcommand{\sainte}{Sainte-Lagu\"e}
\newcommand{\seqphrag}{seq-Phragm\'en}
\DeclareMathOperator{\poly}{poly}
\DeclareMathOperator{\argmax}{argmax}
\newcommand\mydots{\hbox to 0.8em{.\hss.\hss.}}
\newcommand{\reals}{{\mathbb{R}}}
\newcommand{\kso}[1]{\textcolor{blue!85}{kso: #1}}
\definecolor{babyblue}{rgb}{0.54, 0.81, 0.94}
\newcommand*{\MinNumber}{0}%
\newcommand*{\MaxNumber}{1}%
\newcommand{\ApplyGradient}[1]{%
  \pgfmathsetmacro{\PercentColor}{100.0*(1-(#1-\MinNumber)/(\MaxNumber-\MinNumber)*(#1-\MinNumber)/(\MaxNumber-\MinNumber))}%
  \edef\x{\noexpand\cellcolor{babyblue!\PercentColor}}\x\textcolor{black}{#1}%
}
\newcolumntype{R}{>{\collectcell\ApplyGradient}{c}<{\endcollectcell}}
\title{An Experimental Comparison of Multiwinner Voting Rules \\ on Approval Elections}
\begin{document}

\maketitle

\begin{abstract}  
  In this paper, we experimentally compare major approval-based
  multiwinner voting rules. To this end, we define a measure of
  similarity between two equal-sized committees subject to a given
  election. Using synthetic elections coming from several
  distributions, we analyze how similar are the committees provided by
  prominent voting rules. Our results can be visualized as ``maps of
  voting rules'', which provide a counterpoint to a purely axiomatic
  classification of voting rules. 
  The strength of our proposed method is its independence from preimposed classifications (such as the satisfaction of concrete axioms),
  and that it indeed offers a much finer distinction than
  the current state of axiomatic analysis.
\end{abstract}

\section{Introduction}
Multiwinner voting is the process of selecting a fixed number of
candidates (a \emph{committee}) based on the preferences of
agents. This general task occurs in a wide range of applications such
as group recommendations \cite{budgetSocialChoice,gawronusing},
blockchain protocols \cite{cevallos2020verifiably}, political
elections \cite{renwick16,bri-las-sko:c:apportionment}, and the
design of Q\&A platforms \cite{israel2021dynamic}. As it is apparent
from this diverse list of applications, there is a multitude of
desiderata for multiwinner voting rules and their desirability
depends on the setting in which they are  applied.
\citet{elk-fal-sko-sli:c:multiwinner-rules} and \citet{FSST-trends}
suggest a classification of multiwinner voting rules based on three
principles: proportionality, diversity and individual excellence.
These principles capture three main goals in multiwinner voting: (i)
to find a committee that reflects the voters' preferences in a
proportional manner, (ii) to find a committee that represents (or, covers) the
opinions of as many voters as possible (diversity), and (iii) to find a
committee that contains the objectively ``best'' candidates
(individual excellence).

In recent years, a large body of work has helped to shape our
understanding of multiwinner voting rules (as surveyed
by \citet{FSST-trends} and \citet{lac:sko:t:approval-book}).
The key method employed
here is axiomatic analysis.  Among the three aforementioned
principles, proportionality has received the most attention and there
now exists a hierarchy of proportionality axioms by which the
proportionality of a voting rule can be assessed
\cite[Chapter~4]{lac:sko:t:approval-book}. Far fewer axioms
exist for diversity and individual excellence
\cite{FaliszewskiSST17,subiza2017representative,fal-tal:balancing-cc,lac-sko:j;multiwinner-approval-axioms}.

While the axiomatic method has been fundamental in advancing our
understanding of multiwinner rules, significant questions cannot be
approached with axioms.  First and foremost, the satisfaction of an
axiom is a binary fact. If an axiom is not satisfied by a voting rule,
this might be due to a fundamental incompatibility that is evidenced
in nearly every election or simply due to an involved counterexample
that hardly occurs in practice (an extreme example is provided in the
work of \citet{BCKLNSST13}, which disproved an over 20-year-old
conjecture by giving a counterexample with $10^{136}$ candidates;
however, a much smaller example was later reported by \citet{BS13}).
Secondly, while the axiomatic approach is useful to highlight
differences between voting rules, it is rarely helpful to establish
similarities.  Even voting rules that do not share (known) axiomatic
properties can behave very similarly on sampled or real-world
preference data.  Identifying similar voting rules is important, e.g.,
if a computationally demanding rule is infeasible in a given setting
and has to be replaced with a faster-to-compute one. 

The goal of our paper is to close this gap in our understanding and
provide a principled method to assess the similarity of voting rules.
Our proposed method does not rely on preimposed classifications (such
as the proportionality/diversity/individual excellence trichotomy) and, instead,
we base our analysis on comparing the committees provided by the rules.
Specifically, given a distance measure between candidates (which we
interpret as a measure of their similarity), we extend it to
committees and compare how close are the committees output by a number
of major multiwinner rules on several families of synthetic
elections.  For the visualization of our results, we adapt the map
framework proposed by \citet{szu-fal-sko-sli-tal:c:map} and
\citet{boe-bre-fal-nie-szu:c:compass} to display voting rules instead
of elections. 

Our distances between candidates depend solely on which voters
approve them. For example, if two candidates are approved by the same
voters, then we view them as being at distance zero, and if the voters
can be partitioned into those approving either one candidate or the
other, then we view these candidates as maximally distant. To measure
the distance between two committees, we first build a bipartite graph
with members of one committee on the left and members of the other
committee on the right, where each two candidates (from the other
committees) are connected by an edge whose weight is equal to their
distance. The distance of the committees is the weight of the
minimum-weight perfect matching in this graph.  Our main findings are
as follows:
\begin{enumerate}
\item We show that computing two most distant committees in a given
  election is intractable in many settings. This is somewhat
  unfortunate, as such committees would be useful to normalize the
  distances between committees. Due to this hardness result, in the
  remainder of the paper we normalize by observed maximum distances.

\item We compute the committees output by various multiwinner rules on
  a number of synthetic elections and compute the average distances
  between committees provided by different rules. We find that
  committees provided by Multiwinner Approval Voting (AV),
  Chamberlin--Courant (CC), and Minimax Approval Voting tend to be the most
  distinct ones, while those provided by proportional rules are
  between those of AV and CC (often, but not always, closer to AV),
  and far away from those of Minimax.

\item For a number of proportionality axioms, we report how often the
  committees output by various rules satisfy them. 
  Surprisingly, our experiments show that the committees provided by
  proportional voting rules (on sampled profiles) generally satisfy stronger proportionality 
  properties than their axiomatic analysis reveals.
  In particular, we find that some axiomatic distinctions are not observable in our data set of 6000 instances.
\end{enumerate}
All in all, we find that all proportional rules---including those that
fail some of the stronger proportionality axioms---are more or less
similar to each other and form a well-defined cluster.



\section{Preliminaries}
For a given graph $G = (V(G), E(G))$ and a vertex $x \in V(G)$, by
$N(x)$ we denote the neighbors of $x$ 
(i.e., the set $\{y \in V(G): (x,y) \in E(G)\}$)
and by $\deg(x)$ we denote the degree of $x$ (i.e., $\deg(x)=|N(x)|$).

\subsection{Elections}
We consider the approval preference model. An election $E = (C,V)$
consists of a set $C = \{c_1, \ldots, c_m\}$ of candidates and a
collection $V = (v_1, \ldots, v_n)$ of voters, where each voter $v_i$
is endowed with a set of candidates that they approve. 
For a
voter $v$ we denote his or her approval set as $A(v)$, and for a
candidate $c$ we write $A(c)$ to denote the set of voters that
approve~$c$. Whenever we use this notation, the election in question
will be clear from the context.

\subsection{Multiwinner Voting Rules}\label{sec:multiwinnerrules}
An approval-based multiwinner voting rule is a function~$f$ that given
an election $E = (C,V)$ and committee size $k$, $k \leq |C|$, provides
a family $f(E,k)$ of size-$k$ subsets of $C$, referred to as the
winning committees.

We give a brief overview of major approval-based multiwinner voting
rule.  We omit technical details and refer to the survey by
\citet{lac:sko:t:approval-book} for details. We assume that all
voting rules use a tiebreaking mechanism to ensure that they return
exactly one winning committee (this property is known as
resoluteness).\footnote{On the technical level, in our experiments we
  use the implementations of the rules provided in the
  \texttt{abcvoting} library~\cite{abcvoting}, together with their
  default tie-breaking.}

\emph{Thiele methods} are an important class of approval-based
multiwinner voting rules that select a committee $W$ maximizing value
$\textit{sc}_w(W)=\sum_{v\in V} \sum_{i=1}^{|A(v)\cap W|} w(i)$ for
some score function~$w$. In particular, \emph{Multiwinner Approval
  Voting (AV)} is defined by $w(i)=i$ (equivalently, AV selects the
$k$ candidates that are approved by the largest number of
voters). This rule is considered to be the prime example of the
principle of \emph{individual excellence}.  \emph{Chamberlin--Courant (CC)}
selects the committee $W$ that maximizes the number of voters that
approve at least one candidate in $W$, i.e., $w(1)=1$ and $w(i) = 0$
for all $i \neq 1$. CC is an example of a rule designed to achieve
\emph{diversity}.  \emph{Proportional Approval Voting (PAV)} and
\emph{\sainte{} Approval Voting (SLAV)} are defined by
$w(i)=\frac{1}{i}$ and $w(i)=\frac{1}{2i-1}$, respectively.
Finally, \emph{$p$-Geometric rules} are defined by $w(i)=p^{-i}$ for
positive constants $p$.
The weights of PAV can be viewed as the most proportional 
choice in this spectrum \cite{justifiedRepresentation,Sanchez-Fernandez2017Proportional};
weight functions that decrease more quickly than that of PAV (such as those used by SLAV and $p$-Geometric rules) give more importance 
to small groups (\textit{degressive} proportionality).

\emph{Sequential Thiele methods}, also referred to as greedy Thiele
rules in the literature, iteratively build a committee by adding the
candidate that increases the score $\textit{sc}_w$ most, starting with
the empty committee and iterating until $k$ candidates are
selected. We consider seq-PAV, seq-SLAV, and seq-CC.  \emph{Greedy
  Monroe}~\cite{sko-fal-sli:j:multiwinner} is a rule similar to
seq-CC, with an additional constraint that each committee member can
represent at most $\nicefrac{n}{k}$ many voters.


\emph{Satisfaction Approval Voting (SAV)} is defined similarly to
AV and
selects a committee~$W$ maximizing
$\sum_{v\in V} \frac{|A(v)\cap W|}{|A(v)|}$.  \emph{Minimax Approval
  Voting (MAV)} \cite{minimaxProcedure} selects a committee~$W$ that
minimizes $\max_{v\in V} |A(v) \setminus W| + |W \setminus A(v)|$,
i.e., it minimizes the maximum Hamming distance between a voter and
the chosen committee.

Finally, we describe the intuition behind two slightly more complex
rules: \emph{Sequential Phragm\'en (\seqphrag)}
\cite{mathprog-phragmen} and the method of \emph{Equal Shares}
\cite{peters2020proportionality}. Both can be understood as mechanisms
where voters use (virtual) budget to jointly pay for the selection of
candidates in the committee. The cost of adding a candidate is set
(arbitrarily) to~$1$. With \seqphrag, voters start with a budget
of~$0$, which is increased continuously until a group of voters can
pay for the first candidate. The cost (of~$1$) is shared among the
members of the group and the process repeats until the committee is
filled. With Equal Shares, voters start with a budget of
$\nicefrac{k}{n}$. Each round, a candidate is selected that requires
the least budget per voter. This procedure may result in fewer than
$k$ committee members (since the budget is fixed in advance), in which
case \seqphrag{} is used to fill the committee. We refer to the works
of \citet{abcvoting} and \citet{peters2020proportionality} for
details.

\subsection{Proportionality}
Much of the recent progress in multiwinner voting has been dedicated
to the concept of proportionality, to formally define proportionality
and to identify voting rules behaving proportionally. Since our
framework enables us to identify similar voting rules, we are able to
ask whether ``proportional'' rules are indeed similar.  To this end,
we consider four proportionality axioms. These axioms apply to
committees and a voting rule is said to satisfy such a property if it
is guaranteed to return committees exhibiting this property.

\begin{definition}[\citealt{Sanchez-Fernandez2017Proportional}]
  Given an election $E=(C,V)$, a committee~$W$ of size~$k$ satisfies
  \emph{Proportional Justified Representation (PJR)} if there is no
  group of agents $N\subseteq V$ of size
  $|N|\geq \ell\cdot\frac{n}{k}$ that jointly approves at least $\ell$
  common candidates and $\left| \bigcup_{v\in N} A(v) \cap W \right|< \ell$.
\end{definition}
PJR is satisfied by PAV,
\seqphrag, Equal Shares, and Greedy Monroe\footnote{It satisfies PJR
  if $k$ divides $n$ \cite{Sanchez-Fernandez2017Proportional}.}
(among the rules introduced in Section~\ref{sec:multiwinnerrules}).

\begin{definition}[\citealt{justifiedRepresentation}]
  Given an election $E=(C,V)$, a committee~$W$ of size~$k$ satisfies
  \emph{Extended Justified Representation (EJR)} if there is no group
  of agents $N\subseteq V$ of size $|N|\geq \ell\cdot\frac{n}{k}$ that
  jointly approves at least $\ell$ common candidates and
  for each voter $v \in N$, 
  $\left|  A(v) \cap W \right|< \ell$.
\end{definition}
EJR strengthens PJR and is satisfied only by PAV and Equal Shares.
Finally, Justified Representation (JR) is the special case of PJR (and EJR)
restricted to $\ell=1$.

Priceability \cite{peters2020proportionality} is a notion of
proportionality based on assigning budget to voters.
\begin{definition}
  Given an election $E=(C,V)$, a committee~$W$ satisfies
  \emph{priceability} if there is a budget $b\geq 0$ and for each
  voter~$v_i$ a spending function $b_i \colon C \to \mathbb{R}_{\geq 0}$ such
  that:
\begin{enumerate}
\item for each $v_i \in V$, $\sum_{c \in C}b_i(c) \leq b$ (voters do
  not spend more than $b$),
\item for each $v_i \in V$, if $c \notin A(v_i)$ then $b_i(c) = 0$ (voters do not pay for
  candidates they do not approve),
\item if $c \in W$ then $\sum_{v_i \in V}b_i(c) = 1$ and otherwise this
  value is $0$ (payments only for committee members), and
\item for each $c \notin W$,
  $\sum_{v_i \in A(c)}\left(b - \sum_{c' \in W}b_i(c')\right) \leq 1$
  (there is no candidate left that could be bought with \emph{more}
  than a budget of $1$).
\end{enumerate}
\end{definition}
Priceability establishes fairness by distributing the same amount of
budget to each voter and requiring that no candidate could be bought
with a smaller amount (condition~4).  Priceability implies PJR and is
incomparable to EJR. It is satisfied by \seqphrag{} and Equal Shares.



\subsection{Statistical Cultures}

Below, we define the statistical cultures that we use for generating
synthetic instances of approval elections.
\begin{description}
\item[Resampling Model.] The resampling model is parameterized by two values,
  $p \in [0,1]$ and $\phi \in [0,1]$, where the first one describes
  the average number of approvals in a vote, and the second one
  describes the level of disturbance. To generate an election, we
  first sample a central ballot by approving uniformly at random
  $\lfloor pm \rfloor$ candidates. Then, to generate a vote, we
  proceed as follows. For each candidate in the central ballot, we add
  him or her to the vote with probability $1-\phi$, and with
  probability $\phi$ we resample that candidate, that is, we add him
  or her to the ballot with probability $p$. For each candidate
  outside the central ballot, with probability $1-\phi$, we keep him
  or her outside the ballot, and with probability $\phi$ we resample
  him or her.

\item[Disjoint Model.] The disjoint model has three parameters,
  $p \in [0,1]$, $\phi \in [0,1]$, and
  $g\in \mathbb{N} \setminus \{0\}$. It is similar to the resampling
  model, but instead of a single central ballot we have $g$ central
  ballots. At the beginning, we sample $g$ disjoint central ballots
  uniformly at random (each of them approving $\lfloor pm \rfloor$
  candidates, and then proceed in the same manner as before, however
  before generating each vote, we uniformly at random select one of
  our $g$ central ballots on which our vote will be based.

\item[Euclidean Models.] The Euclidean model is parameterized by a radius
  $r\in \mathbb{R}_+$. For each voter and each candidate we sample their
  ideal point in the $t$-dimensional Euclidean space (uniform in a
  $t$-dimensional cube). Then, each
  voter approves all the candidates within radius~$r$.

\item[Party-list Model.] The party-list model is parameterized by $\alpha$
  and $g$. First, we divide the candidates into $g$ groups of size
  $\lfloor \nicefrac{m}{g} \rfloor$ each; we refer to these groups as
  the parties. (If $m \ \textrm{mod} \ g \neq 0$ then some
    candidates remain unapproved.) Second, we create an urn which
  contains a single vote for each party, approving exactly its
  members. Then, we iteratively generate votes (one at a time): 
  We draw a vote from the urn, add its copy to the election,
  and return the vote to the urn together with $\alpha g$ copies.


\item[Pabulib Model.] We also use real-life participatory budgeting (PB) data from Pabulib~\cite{fal-fli-pet-pie-sko-sto-szu-tal:c:participatory-budgeting-tools}, which we treat as a statistical culture over approval elections (in particular, we omit details related to PB, such as the costs of the candidates). We selected $21$ instances, i.e., all that contain at least $100$ candidates and $100$ voters, and where the average number of approved candidates per voter is at least $3$ (for instances with truncated ordinal ballots, we convert them to approval ones by approving all candidates that were ranked). To sample an election from Pabulib, we first uniformly at random choose one of the original $21$ instances, then we randomly select a subset of $100$ candidates and a random subset of $100$ voters. We omit voters who cast empty ballots.

\end{description}

The resampling and disjoint models are due to
\citet{SFJLSS22}. Various Euclidean models are commonly used in the
literature on elections~\cite{enelow1984spatial,enelow1990advances};
we point to the recent works of \citet{bre-fal-kac-nie2019:experimental_ejr} and \citet{god-bat-sko-fal:c:2d}, which use
them in the approval setting. The party-list model is an adaptation of
the Polya-Eggenberger urn model~\cite{berg1985paradox,mcc-sli:j:similarity-rules}.

\section{Similarity Between Committees}
Our idea of comparing voting rules is based on measuring the
similarity between the committees that they produce. 

\subsection{Basic Framework}
Fix some election $E = (C,V)$ with candidate set
$C = \{c_1, \ldots, c_m\}$ and voter collection
$V = (v_1, \ldots, v_n)$. We assume that we have some distance $d$
over the candidates, such that if $d(c_i,c_j)$ is small---for whatever
``small'' means under a given distance---then candidates $c_i$ and
$c_j$ are similar, and if it is large then they are not.  In the most
basic setting we could take the discrete distance, where
$\disc(c_i,c_j) = 0$ exactly if $i = j$ and $\disc(c_i,c_j) = 1$
for every other case. Later we will discuss two more 
distances.

Let $X = \{x_1, \ldots, x_k\}$ and $Y = \{y_1, \ldots, y_k\}$ be two
size-$k$ committees over $C$. We extend $d$ to act on committees as
follows: We form a bipartite graph with members of $X$ as the vertices
on the left, members of $Y$ as the vertices on the right, and where
for each $x \in X$ and each $y \in Y$ we have an edge with weight
$d(x,y)$; if some candidate $c$ belongs to both $X$ and $Y$, then we
have two copies of $c$, one on the left and one on the right.  The
distance $d(X,Y)$ between $X$ and $Y$
is the weight of the minimum-weight matching in this graph.  One can
verify that it indeed is a pseudodistance.

\newcommand{\widehhat}[1]{#1}
\begin{proposition}
  For each (pseudo)distance $d$ over the candidates, its
  above-described extension to committees is a pseudodistance.
\end{proposition}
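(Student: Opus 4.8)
The plan is to verify the three defining properties of a pseudodistance---non-negativity together with $d(X,X)=0$, symmetry, and the triangle inequality---for the matching-based extension. Throughout, I would represent each perfect matching of the bipartite graph as a bijection (equivalently, a permutation of $\{1,\dots,k\}$) between the $k$ left vertices and the $k$ right vertices, so that a matching $\pi$ between $X=\{x_1,\dots,x_k\}$ and $Y=\{y_1,\dots,y_k\}$ has weight $\sum_{i=1}^{k} d(x_i,y_{\pi(i)})$, and $d(X,Y)$ is the minimum of this quantity over all $\pi$. The handling of repeated candidates (the ``two copies'' case) causes no difficulty, since the argument only ever refers to the vertices and to the candidate each vertex represents.

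The first two properties are immediate. Non-negativity follows because every edge weight $d(x,y)$ is non-negative, so every matching has non-negative weight and hence so does the minimum. For $d(X,X)=0$, matching each left vertex to its own copy on the right yields weight $\sum_{i} d(x_i,x_i)=0$, which, combined with non-negativity, forces equality. Symmetry holds because swapping the two sides of the bipartite graph turns a matching between $X$ and $Y$ into one between $Y$ and $X$ of the same weight (using $d(x,y)=d(y,x)$ at the candidate level); hence the two minima coincide.

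The triangle inequality is the only substantive step, and the key idea is to compose optimal matchings. Let $\pi$ be an optimal matching between $X$ and $Y$ and let $\rho$ be an optimal matching between $Y$ and $Z$, so that their weights are $d(X,Y)$ and $d(Y,Z)$, respectively. Then $\rho\circ\pi$ is a valid matching between $X$ and $Z$, and applying the candidate-level triangle inequality termwise gives $d(x_i,z_{\rho(\pi(i))}) \le d(x_i,y_{\pi(i)}) + d(y_{\pi(i)},z_{\rho(\pi(i))})$. Summing over $i$, the first group of terms sums to $d(X,Y)$; for the second group, reindexing via $j=\pi(i)$---legitimate because $\pi$ is a bijection---turns the sum into $\sum_{j} d(y_j,z_{\rho(j)})$, which is exactly $d(Y,Z)$. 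Since $d(X,Z)$ is the minimum over all matchings, it is at most the weight of $\rho\circ\pi$, yielding $d(X,Z)\le d(X,Y)+d(Y,Z)$.

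The main obstacle is precisely ensuring that the composed matching $\rho\circ\pi$ is well defined and that the reindexing step is valid; both rest on the fact that perfect matchings are bijections, so that composing two of them again gives a perfect matching and so that $\pi$ carries the index set $\{1,\dots,k\}$ bijectively onto itself. Once this structural point is in place, the inequality is a direct consequence of the triangle inequality for $d$ on candidates, and no further case analysis (e.g.\ for shared committee members) is required.
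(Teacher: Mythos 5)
Your proof is correct and follows essentially the same route as the paper's: both establish the triangle inequality by composing the two optimal matchings (the paper phrases this as ``reordering'' the committee members so that both optimal matchings become diagonal, which is your permutation composition $\rho\circ\pi$ in disguise), applying the candidate-level triangle inequality termwise, and invoking minimality of the matching defining the left-hand side. Your write-up is somewhat more explicit than the paper's---spelling out non-negativity, symmetry, and the bijection/reindexing bookkeeping that the paper leaves implicit---but the underlying argument is identical.
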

\begin{proof}
  Let $X = \{x_1, \ldots, x_k\}$, $Y = \{y_1,\ldots, y_k\}$, and
  $Z = \{z_1,\ldots, z_k\}$ be three size-$k$ committees from the same
  election. By definition of $\widehhat{d}$, we immediately get that
  $\widehhat{d}(X,X) = 0$ and $\widehhat{d}(X,Y) = \widehhat{d}(Y,X)$. It
  remains to show that
  $\widehhat{d}(X,Y) \leq \widehhat{d}(X,Z) + \widehhat{d}(Z,X)$. By reordering the
  members of $X$, $Y$, and $Z$, we can assume that:
  \begin{align*}
    \widehhat{d}(X,Z) &= d(x_1,z_1) + \cdots + d(x_k,z_k), \text{ and} \\
    \widehhat{d}(Z,Y) &= d(z_1,y_1) + \cdots + d(z_k,y_k).    
  \end{align*}
  Since $d$ is a distance, for each $i \in [k]$ we have
  $d(x_i,y_i) \leq d(x_i,z_i) + d(z_i,y_i)$. Further, we know that
  $\widehhat{d}(X,Y) \leq d(x_1,y_i) + \cdots + d(x_k,y_k)$ (because
  instead of using the lowest-weight matching we use some fixed
  one). By putting these inequalities together, we get that, indeed,
  $\widehhat{d}(X,Y) \leq \widehhat{d}(X,Z) + \widehhat{d}(Z,Y)$.
\end{proof}

Let us now consider concrete distance measures between candidates, 
all of which are well-known distances (cf. \cite{deza2009encyclopedia}). 
The discrete
distance is a rather radical approach as it stipulates that no two
candidates are ever similar to each other. Yet, one could argue that
there are cases where some candidates are clearly more similar to each
other than others.
\begin{example}\label{ex:similarity-intuition}
  Consider an election $E = (C,V)$ with $C = \{p,q,r,s\}$ and voter
  collection $V = \{v_1, v_2, v_3, v_4)$. The approval sets are as
  follows:
  \begin{align*}
    A(v_1) &= \{p\}, &
    A(v_2) &= \{q,s\}, \\
    A(v_3) &= \{p,r,s\}, &
    A(v_4) &= \{q\}. 
  \end{align*}
  One could argue that candidates $p$ and $q$ are completely
  dissimilar because they are approved by complementary sets of
  voters. On the other hand, half of the voters who approve $p$ also
  approve $r$, and every voter who approves $r$ also approves~$p$. The
  similarity between $p$ and $s$ seems a bit weaker than that between
  $p$ and $r$ because there is a voter who approves $s$ but does not
  approve $p$.
\end{example}
In addition to the discrete one, we consider the following two
distances between the candidates (we assume some election $E$ with $n$
voters):
\begin{enumerate}
\item The Hamming distance between candidates $c$ and $d$, denoted by $\hamm(c,d)$, is the
  number of voters that approve one of them but not the
  other. Formally, we have:
  \[
  \hamm(c,d) = |A(c) \setminus A(d)| + |A(d) \setminus A(c)|.
  \]
\item The Jaccard distance between candidates $c$ and $d$ is defined
  as: 
  \[
  \jacc(c,d) = 1 - \frac{|A(c) \cap A(d)|}{|A(c) \cup A(d)|}=\frac{\hamm(c,d)}{|A(c) \cup A(d)|}
  \]
\end{enumerate}
There are two main differences between the Hamming and the Jaccard
distances.  The less important one is that the former is not
normalized and can assume values between $0$ and~$n$, whereas the
latter always assumes values between $0$ and~$1$.  Occasionally, we
will speak of normalized Hamming distance
$\normhamm(c,d) = \nicefrac{1}{n}\cdot\hamm(c,d)$.

The more important difference is in how both distances interpret lack
of an approval for a candidate. Under the Hamming distance, we assume
that it is a conscious decision, indicating that a voter disapproves
of a candidate. Hence, if neither candidate $c$ nor $d$ is approved by
a voter, then Hamming distance interprets it as a sign of similarity
between them.\footnote{Yet, a voter might dislike $c$ and $d$ for two
  different reasons---e.g., one is too liberal and the other is not
  liberal enough, which makes this approach questionable.}  Under the
Jaccard distance, we assume that we cannot make any such conclusions.
\begin{example}\label{ex:similarity}
  Let us consider the election from
  Example~\ref{ex:similarity-intuition}.  We have the following
  normalized Hamming and Jaccard distances between the candidates:
  \begin{align*}
    \normhamm(p,q)&= 1, &
    \jacc(p,q) &= 1,  \\
    \normhamm(p,r)&= \nicefrac{1}{4}, &
    \jacc(p,r) &= \nicefrac{1}{2},  \\
    \normhamm(p,s)&= \nicefrac{1}{2}, &
    \jacc(p,s) &= \nicefrac{2}{3},  \\
    \normhamm(q,r)&= \nicefrac{3}{4}, &
    \jacc(q,r) &= 1.
  \end{align*}
  As our intuition from Example~\ref{ex:similarity-intuition}
  suggested, both our distances indicate that candidate $p$ is
  completely dissimilar from candidate $q$, and is more similar to $r$
  than to $s$. However, according to the normalized Hamming distance
  $q$ and $r$ are not completely dissimilar, whereas they are at
  maximum distance according to Jaccard.
\end{example}

\subsection{Hardness of Finding Farthest Committees}
Given an election and a committee size, it is convenient to know what
is the largest possible distance between two committees. In
particular, one could use this value to normalize distances between
committees provided by various rules. Unfortunately, it turns out that
finding two farthest committees under a given distance is often
intractable and in our experiments we will resort to
normalizing by largest observed distances.
We define the {\scshape Farthest Committees} problem as follows.

\begin{definition}
In the {\scshape Farthest Committees} problem (FC)
we are given an election $(C,V)$, an integer $k$, and a
pseudodistance $d$ over the candidates.
The goal is to output two committees,
each of size $k$, which maximize the distance $d$ between them.
\end{definition}
FC can be considered also in its decision variant, where we ask if there
are two size-$k$ committees whose distance is at least a given value.

FC under the discrete distance is trivially polynomial-time solvable.
In Theorem~\ref{thm:fc-np-hard} we show that FC is computationally
hard in the cases of the Hamming distance and the Jaccard distance.
This comes through a reduction from the {\scshape Balanced Biclique}
problem defined as follows.

\begin{definition}
In the {\scshape Balanced Biclique} problem (BB) we are given
a bipartite graph $G=(A_G \cup B_G,E_G)$ and an
integer $k_G$.  The question is whether there exists a biclique of
size $k_G \times k_G$ in $G$ ($k_G$-biclique), i.e., a set of vertices
$X \cup Y$ such that $X \subseteq A_G, Y \subseteq B_G, |X|=|Y|=k_G$
and for every $x \in X, y \in Y$ we have $(x,y) \in E_G$.
\end{definition}
\noindent BB is known to be NP-hard~\cite[p. 196]{GareyJ79} and W[1]-hard
w.r.t. $k_G$~\cite{Lin18}.

The idea of the reduction is to encode vertices by candidates and
edges by large distances (no-edges are represented by slightly smaller
distances).  A technical issue is to define proper votes in order to:
1) achieve required relation between the distances and, 2) forbid taking
candidates to one committee that come from both parts of the input
bipartite graph.

\begin{theorem}\label{thm:fc-np-hard}
  {\scshape Farthest Committees} is NP-hard and W[1]-hard w.r.t. $k$ (the size of committees), even in the case of the Hamming distance or the Jaccard distance.
\end{theorem}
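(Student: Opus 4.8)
The plan is to reduce from \textsc{Balanced Biclique} (BB), following the suggested idea that vertices become candidates and adjacency is encoded in the pairwise distances, with edges mapped to slightly larger distances than non-edges. Given a BB instance $(G,k_G)$ with $G=(A_G\cup B_G,E_G)$, I build an election with candidate set $\{c_v : v\in A_G\cup B_G\}$ and set the committee size to $k=k_G$. The votes are designed so that the induced Hamming (resp.\ Jaccard) distance takes a fully controlled value on each of the four possible types of candidate pair, and so that the value on a cross pair coming from an \emph{edge} is the unique largest distance in the whole instance.

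Concretely, I introduce a large block $W_A$ of $N$ voters approving exactly the $A_G$-candidates, a symmetric block $W_B$ of $N$ voters approving exactly the $B_G$-candidates, one shared voter approving exactly $c_a$ and $c_b$ for every \emph{non-edge} $(a,b)\notin E_G$, and private ``dummy'' voters that pad each $A_G$-candidate (resp.\ $B_G$-candidate) to a common approval-set size $t_A$ (resp.\ $t_B$). Because all same-side approval sets then have equal size, and the only voters an $A$-candidate and a $B$-candidate can share are the per-non-edge voters, the symmetric-difference and intersection counts are exact: with $L:=t_A+t_B$, a cross pair has Hamming distance $L$ if it is an edge and $L-2$ if it is a non-edge, while two same-side candidates share their whole block and are at distance $O(\rho)\ll L$, where $\rho\le\max(|A_G|,|B_G|)$ is the per-vertex number of non-block approvals. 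For Jaccard, the two blocks make every cross-edge pair \emph{disjoint}, hence at distance exactly $1$, whereas every other pair has nonzero intersection and distance strictly below $1$. Choosing $N$ polynomially large makes the cross-edge value the strict maximum in both metrics.

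The crux is a clean characterisation of when the committee distance $d(X,Y)$ — the minimum-weight perfect matching between $X$ and $Y$ — reaches the global maximum $kL$ (resp.\ $k$). Since every pairwise weight is at most the cross-edge value, and any single pair of weight below the maximum can be completed to a perfect matching of strictly smaller total weight, the minimum matching is maximal if and only if \emph{all} $k^2$ pairs between $X$ and $Y$ are cross-edges. This simultaneously forbids mixing — any same-side pair is cheap, so a committee using both sides can never reach the maximum — and forces $X$ and $Y$ to sit on opposite sides and be completely joined, i.e.\ to form a $k_G\times k_G$ biclique; conversely the two sides of a biclique give committees all of whose cross pairs are edges, yielding matching weight exactly $kL$ (resp.\ $k$). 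Hence two size-$k$ committees at distance $\ge kL$ (resp.\ $\ge k$) exist iff $G$ has a $k_G$-biclique. As the reduction is polynomial and sets $k=k_G$, NP-hardness follows from NP-hardness of BB, and since it preserves the parameter it is a valid parameterized reduction, giving W[1]-hardness with respect to $k$.

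I expect the main obstacle to be precisely the distance bookkeeping: one must avoid the ``pollution'' whereby a voter inserted to tune one pair silently perturbs the distances of many others. The uniform-size-plus-blocks construction is what neutralises this, reducing every Hamming value to $t_A+t_B-2\lvert A(c)\cap A(c')\rvert$ with the intersections determined entirely by block membership and the single per-non-edge voters; checking that this really leaves cross-edge as the \emph{strict} maximum, and verifying the analogous, slightly more delicate Jaccard inequalities, is the part that needs the most care. A helpful feature is that the ``minimum inside a maximum'' shape of FC works in our favour: because the minimiser can exploit any one cheap pair, attaining the maximum forces every pair to be maximal, which is exactly the all-edges (biclique) condition.
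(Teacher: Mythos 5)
Your proposal is correct and follows essentially the same route as the paper: a reduction from \textsc{Balanced Biclique} in which vertices become candidates, voter gadgets make cross-edge pairs the unique maximum pairwise distance (non-edges slightly smaller, same-side pairs much smaller), and the key lemma that the minimum-weight perfect matching attains the global maximum if and only if all $k^2$ cross pairs are maximal, which forces the two committees to be the two sides of a $k_G\times k_G$ biclique, with the parameter preserved ($k=k_G$) for W[1]-hardness. The only notable difference is that your single construction (two voter blocks, one voter per non-edge, and padding dummies) handles Hamming and Jaccard simultaneously, whereas the paper uses one construction for Jaccard (voters $v_A$, $v_B$, and $v_x$ approving $\{x\}\cup(B_G\setminus N(x))$) and a separately modified one with vote copies and degree-correcting voters for Hamming.
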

\begin{proof}
  First we show the theorem for the Jaccard distance.  The reduction
  is from {\scshape Balanced Biclique} (BB).  We take an instance
  $(G=(A_G \cup B_G,E_G),k_G)$ of BB and produce an instance of the
  decision version of {\scshape Farthest Committees} (FC) with the Jaccard
  distance.

  For every vertex $x \in A_G \cup B_G$ we define a candidate
  $x \in C$. Thus, $|C| = |A_G \cup B_G|$.  We overload the notation
  and by $x$ we mean both a vertex $x$ from the graph $G$ and a candidate $x$
  from $C$.  This is due to the bijection between them.

  We define the votes as follows.  There is one voter $v_A$ with an
  approval set $A_G$ and one voter $v_B$ with an approval set $B_G$.
  Then, for every $x \in A_G$ we define a voter $v_x$ with an approval
  set $\{x\} \cup (B_G \setminus N(x))$ (recall that $N(x)$ means the
  neighbors of $x$).  Hence, we have $|V| = |A_G|+2$.

  We ask for an existence of two committees of size $k=k_G$ with the
  Jaccard distance equal to $k$ (indeed, $k$ is the largest possible
  Jaccard distance between two size-$k$ committees).

  In order to show correctness of the reduction, it suffices to prove
  the following lemma which implies that only $k$-bicliques correspond
  to committees at Jaccard distance~$k$.


\begin{lemma}\label{lemma:bb-jacc}
  For every $X, Y \subseteq A_G \cup B_G, |X|=|Y|=k$, the following equivalence holds:
  $(X,Y)$ is a $k$-biclique in $G$
  if and only if
  $\jacc(X,Y) = k$.
\end{lemma}

\begin{proof}
We will show two implications. First, we assume that
$X,Y \subseteq A_G \cup B_G$ form a $k$-biclique in $G$.  It means
that, w.l.o.g., $X \subseteq A_G$ and $Y \subseteq B_G$.  We will
show that the two committees represented by candidates from $X$ and
$Y$ realize the Jaccard distance $k$.

First, we show that for any two candidates from different committees there is no voter approving both.
\begin{claim}\label{claim:xy-distinct-approvals}
  For every $x \in \hspace{-1pt} X, y \in Y$ we have $A(x) \cap A(y) = \emptyset$.
\end{claim}
\begin{proof}
  It is enough to show that every voter does not approve one of the
  candidates: $x$ or $y$.  Recall that $x \in A_G$ and $y \in B_G$.
  First, voter $v_B$ does not approve $x$ and voter $v_A$ does not
  approve $y$.  Second, as $(X,Y)$ is a biclique we have $y \in N(x)$,
  therefore voter $v_x$ does not approve $y$.  For every
  $z \in A_G \setminus \{ x \}$, $v_z$ does not approve $x$.
\end{proof}

Therefore, the Jaccard distance between every two candidates $x \in X$
and $y \in Y$ is equal to
$1-\frac{|A(x) \cap A(y)|}{|A(x) \cup A(y)|}
\stackrel{\text{Claim}\ref{claim:xy-distinct-approvals}}{=} 1$.
Hence, any perfect matching between $X$ and $Y$
(so also the Jaccard distance) has weight $|X| = k$, as required.

In order to show the implication in the other direction,
let $X,Y \subseteq C$ be two $k$-sized committees at the
Jaccard distance equal to $k$.  We will show that $(X,Y)$ is a
$k$-biclique.
We will show that $X$ and $Y$ are subsets of different parts of $G$.
For this we use the following claim.

\begin{claim}\label{claim:xy-jacc-1}
  For every $x \in X, y \in Y$ we have $\jacc(x,y) = 1$.
\end{claim}
\begin{proof}
  By contradiction, assume that for some $x \in X$ and $y \in Y$ we
  have $\jacc(x,y) < 1$.  Then, a perfect matching between candidates
  in $X$ and candidates in $Y$ that contains an edge $(x,y)$ (which
  has weight strictly smaller than $1$) and any other edges (which
  have weights at most $1$) has weight strictly smaller than
  $|X| = k$.  This is in contradiction with $\jacc(X,Y) = k$.
\end{proof}

Notice that for every $x_1,x_2 \in A_G$ we have
$|A(x_1) \cap A(x_2)| = 1$ (because of voter $v_A$), hence
$\jacc(x_1,x_2) < 1$.  Analogously, we have
$|A(y_1) \cap A(y_2)| \geq 1$ for every $y_1,y_2 \in B_G$ (because of
voter $v_B$), hence $\jacc(y_1,y_2) < 1$.  It means that, due to
Claim~\ref{claim:xy-jacc-1}, one of the sets $X$ and $Y$ is a subset
of $A_G$ and the other one is a subset of $B_G$.

It remains to show that $(X,Y)$ is a biclique, i.e., for every
$x \in X$ and $y \in Y$ we have $(x,y) \in E_G$.  Let us assume, by
contradiction, that $(x,y) \notin E_G$, i.e., $y \notin N(x)$.  It
means that voter $v_x$ approves both candidates, $x$ and $y$.
Therefore, $|A(x) \cap A(y)| \geq 1$ (actually, this is exactly $1$),
hence $\jacc(x,y) < 1$.  This is in contradiction with
Claim~\ref{claim:xy-jacc-1}, so $(X,Y)$ is truly a $k$-biclique.
\end{proof}


It is straightforward to see that the reduction runs in polynomial time.
Hence, NP-hardness of FC follows from NP-hardness of
BB~\cite[p. 196]{GareyJ79}. Furthermore, BB is W[1]-hard
w.r.t. $k_G$~\cite{Lin18} and we have $k=k_G$, so indeed, FC is
W[1]-hard w.r.t. $k$.

In the case of the Hamming distance we use the same reduction with the
following changes: (1) Instead of one voter $v_A$ we add $|A_G|$ many
voters $v_A$; (2) Instead of one voter $v_B$ we add $|A_G|$ many voters
$v_B$; (3) We add $|A_G|$ many {\textit degree-correcting} voters who
do not approve any candidates from $A_G$ but, for every candidate
$y \in B_G$, exactly $\deg(y)$ many of them approve $y$; (4) We ask
for an existence of two committees of size $k$ with the Hamming
distance equal to $k \cdot (3|A_G|+1)$.  After such modifications we
still have $C = A_G \cup B_G$, but we have $|V| = 4 \cdot |A_G|$.

In order to show correctness of the reduction it is enough to prove the following lemma.

\begin{lemma}
  For every $X, Y \subseteq A_G \cup B_G, |X|=|Y|=k$, the following equivalence holds:
  $(X,Y)$ is a biclique of size $k \times k$ in $G$
  if and only if
  $\hamm(X,Y) = k \cdot (3|A_G|+1)$.
\end{lemma}
\begin{proof}
  The proof is analogous to that of Lemma~\ref{lemma:bb-jacc}.  The
  main difference is in calculating pairwise distances between
  candidates because we have a different set of voters and we use the
  Hamming distance instead of the Jaccard one.
  
  The distances between candidates are as follows  (in the
  calculations below, the first term contains the Hamming
  distance realized by all the voters $v_A$ and $v_B$, the second one regards
  all the voters $v_x, x \in A_G$, and the third one regards the
  degree-correcting voters):
  \begin{enumerate}
    \item If $x_1,x_2 \in A_G$ then $\hamm(x_1,x_2) = (0)+(2)+(0) = 2$.
    \item If $y_1,y_2 \in B_G$ then $\hamm(y_1,y_2) \leq
     (0) + (|A_G|) + (|A_G|) = 2 \cdot |A_G|$.
    \item If $x \in A_G, y \in B_G, (x,y) \notin E$ then $\hamm(x,y) = (2|A_G|)+(|A_G|-\deg(y)-1)+(\deg(y)) = 3 \cdot |A_G|-1$.
    \item If $x \in A_G, y \in B_G, (x,y) \in E$ then $\hamm(x,y) = (2|A_G|)+(|A_G|-\deg(y)+1)+(\deg(y)) = 3 \cdot |A_G|+1$.
  \end{enumerate}
  Using the distances calculated above and analogous arguments as in
  the proof of Lemma~\ref{lemma:bb-jacc}, we obtain that two
  committees corresponding to two parts of a $k$-biclique realize the
  required distance $k \cdot (3|A_G|+1)$.  Furthermore, the required
  distance between committees can be realized only by two committees
  where all pairs of candidates (one candidate from one committee) are
  at the Hamming distance $3 \cdot |A_G|+1$.  It means that each such
  pair defines an edge in the original graph, hence a $k$-biclique.
\end{proof}

This finishes the proof of the theorem.
\end{proof}

FC with the normalized Hamming distance is also NP-hard because its
objective function is just scaled compared to the Hamming distance.

\subsubsection{Parameterized Hardness}
A brute-force algorithm for {\scshape Farthest Committees} (for any distance measure) runs in time $m^{2k} \cdot \poly(n,m)$ (by evaluating all possible pairs of $k$-sized committees).
Unfortunately, we cannot hope for large improvements over this as stated below.

\begin{proposition}\label{prop:eth-hard-nmk}
  Under a randomized version of the Exponential Time Hypothesis, there is no $(n+m)^{o(\sqrt{k})}$-time algorithm for {\scshape Farthest Committees}.
  This holds even for the Hamming distance and for the Jaccard distance.
\end{proposition}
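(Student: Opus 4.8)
The plan is to reuse the parameterized reduction from Balanced Biclique (BB) to Farthest Committees (FC) that we already built in the proof of Theorem~\ref{thm:fc-np-hard}, and to combine it with the known, tight ETH-based lower bound for BB. The crucial features of that reduction are that (i) it runs in polynomial time, (ii) it preserves the parameter exactly, $k = k_G$, and (iii) the produced election has only $O(|A_G \cup B_G|)$ candidates and $O(|A_G|)$ voters (for both the Jaccard and the Hamming construction), so $n + m = \poly(|G|)$. These three properties are exactly what is needed to transport a running-time lower bound of the form $|G|^{o(\sqrt{k_G})}$ from BB to an $(n+m)^{o(\sqrt{k})}$ lower bound for FC.

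First I would recall the lower bound for BB. By Lin's work~\cite{Lin18}, under a randomized version of ETH there is no algorithm solving the $k_G$-biclique problem in time $f(k_G) \cdot |G|^{o(\sqrt{k_G})}$ for any computable $f$. The square root here is not accidental: Lin's (randomized) reduction starts from $k$-Clique---which admits no $|V|^{o(k)}$ algorithm under ETH---and produces biclique instances whose biclique parameter is $\Theta(k^2)$; hence the $n^{o(k)}$ clique bound turns into an $|G|^{o(\sqrt{k_G})}$ biclique bound. Because this reduction is randomized, the hypothesis we may invoke is precisely the randomized ETH, matching the statement of the proposition.

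Then I would simply compose. Suppose, for contradiction, that FC (say, under the Jaccard distance) admits an $(n+m)^{o(\sqrt{k})}$-time algorithm. Given a BB instance $(G, k_G)$, we apply our reduction to obtain an FC instance with $k = k_G$ and $n + m \le |G|^{c}$ for some constant $c$, and then run the assumed algorithm. The total time is $\poly(|G|) + \left(|G|^{c}\right)^{o(\sqrt{k})} = |G|^{o(\sqrt{k_G})}$, since a constant factor $c$ in the exponent is absorbed into the $o(\cdot)$ and $k = k_G$. This contradicts the BB lower bound above. The same argument works verbatim for the Hamming distance, using the corresponding construction from the proof of Theorem~\ref{thm:fc-np-hard}.

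The main thing to get right is the bookkeeping in the exponent: one must check that the polynomial blow-up in instance size (the constant $c$) and the polynomial reduction time are harmless, i.e., that $|G|^{c \cdot o(\sqrt{k})}$ is still $|G|^{o(\sqrt{k_G})}$ and that the additive reduction cost $\poly(|G|) = |G|^{O(1)} = |G|^{o(\sqrt{k_G})}$ does not dominate. This is routine given that $c$ is an absolute constant and $k = k_G$; the only genuinely external ingredient is the $\sqrt{k}$-form of the BB lower bound, which we import from~\cite{Lin18}.
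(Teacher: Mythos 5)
Your proposal is correct and follows essentially the same route as the paper: both invoke Lin's randomized-ETH lower bound of $|V_G|^{o(\sqrt{k_G})}$ for {\scshape Balanced Biclique} and transfer it through the polynomial-time, parameter-preserving ($k = k_G$) reduction of Theorem~\ref{thm:fc-np-hard}, using that the constructed election has $n + m = O(|V_G|)$ (in fact the paper notes the linear bounds $n = O(|V_G|)$, $m = O(|V_G|)$, though your polynomial bound suffices). Your write-up merely spells out the exponent bookkeeping that the paper leaves implicit.
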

\begin{proof}
The result follows from hardness of {\scshape Balanced Biclique}~\cite{Lin18},
which excludes (under the same hypothesis) existence of
$|V_G|^{o(\sqrt{k})}$-time algorithm.
Using our reduction from Theorem~\ref{thm:fc-np-hard} and the fact that $n = O(|V_G|), m = O(|V_G|)$, the theorem follows.
\end{proof}
%
Naturally, FC is FPT w.r.t. $m$ (for any distance measure) by a
brute-force algorithm which evaluates $\binom{m}{k}^2 \leq 4^m$ many
pairs of committees.

Next, we show that FC is FPT w.r.t. $n+k$,
although the dependence on a parameter is double-exponential.
\begin{theorem}\label{thm:fpt-n-k}
  For a given candidate pseudodistance function $d(x,y)$,
  which is a function of $A(x)$ and $A(y)$ only,
  {\scshape Farthest Committees} under $d$ is FPT w.r.t. $n+k$.
\end{theorem}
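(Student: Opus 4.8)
The plan is to exploit the hypothesis that $d(x,y)$ depends only on the approval sets $A(x)$ and $A(y)$. Since every approval set is a subset of the $n$-element voter collection, there are at most $2^n$ distinct approval sets; I would call two candidates with identical approval sets \emph{equivalent} and thereby partition the candidates into at most $2^n$ \emph{types}. The first step is preprocessing: in $\poly(n,m)$ time, compute the type of each candidate, the number $\mathrm{cnt}(\tau)$ of candidates of each type $\tau$, and a type-to-type distance matrix $D[\tau,\sigma]$ equal to $d(x,y)$ for (any) candidate $x$ of type $\tau$ and $y$ of type $\sigma$. This matrix is well-defined precisely because $d$ sees only the approval sets.

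Next I would argue that the distance $d(X,Y)$ between two size-$k$ committees depends only on their \emph{type profiles}, i.e., on the integer vectors $(t_X(\tau))_\tau$ and $(t_Y(\tau))_\tau$ recording how many members of each committee have each type. Indeed, the weight of every edge in the matching graph is determined by the types of its two endpoints, so the minimum-weight perfect matching---and hence $d(X,Y)$---can be recovered from the two profiles and $D$ alone (e.g.\ by matching the multiset of types of $X$ against that of $Y$). A profile $t$ is \emph{realizable} as a genuine size-$k$ committee iff $\sum_\tau t(\tau)=k$ and $t(\tau)\le \mathrm{cnt}(\tau)$ for every $\tau$. A point to be careful about here is that two committees may legitimately share candidates, so the realizability constraints on $t_X$ and on $t_Y$ are independent of one another (a type with a single available candidate can still contribute a member to both $X$ and $Y$).

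The algorithm then enumerates every pair of realizable profiles $(t_X,t_Y)$, computes the corresponding committee distance by a single minimum-weight perfect matching on the type level, and returns the pair maximizing this value; actual committees are recovered by picking $t(\tau)$ arbitrary representatives of each type $\tau$. Correctness is immediate, since the profiles range over exactly the realizable committees and $d(X,Y)$ is a function of the profiles. For the running time, the number of profiles is bounded by a function of $n+k$ only: there are at most $2^n$ types, and each receives a multiplicity in $\{0,\dots,k\}$, giving at most $(k+1)^{2^n}$ profiles and hence at most $(k+1)^{2\cdot 2^n}$ pairs. Each pair is handled in time polynomial in $k$ (and $n$), and the preprocessing is $\poly(n,m)$, so the total running time is $f(n+k)\cdot\poly(n,m)$---FPT, with a double-exponential dependence on the parameter $n+k$.

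The main obstacle is not any single computational step but making the reduction-to-types fully rigorous: I must verify that the minimum-weight perfect matching genuinely factors through type profiles, so that maximizing over profiles is equivalent to maximizing over committees, and I must handle the realizability bookkeeping carefully (in particular the non-disjointness remark above). Once these structural facts are established, the remaining work is the routine counting that confirms the claimed double-exponential bound.
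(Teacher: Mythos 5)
Your proposal is correct and follows essentially the same route as the paper's proof: partition candidates into at most $2^n$ types by their approval sets, enumerate the (at most $(k+1)^{2^n}$ per committee) type-multiplicity vectors for the two committees, and evaluate each pair via a polynomial-time minimum-weight matching, yielding a double-exponential dependence on $n+k$. Your write-up is in fact more careful than the paper's (explicit realizability constraints, the observation that the two committees may overlap, and the verification that $d(X,Y)$ factors through type profiles), but there is no substantive difference in approach.
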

\begin{proof}
We say that two candidates $x,y$ have the same type if they are approved by the same voters, i.e., $A(x) = A(y)$.
Let us denote the number of candidate types by $t$.
Clearly, we have $t \leq 2^n$.
We find a solution by guessing how many candidates of every type are in both committees.
For one committee there are at most $k^t \leq k^{2^n}$ many choices.
Hence, we can find a solution after $k^{2^{n+1}}$ many checks, each in polynomial time (by computing the minimum-weight matching).
\end{proof}

Using more involved arguments we show that FC is FPT w.r.t. $n$.
Such algorithm can be useful in the case when the committee size $k$ is much larger than $n$.
\begin{theorem}\label{thm:fpt-n}
  For a given candidate pseudodistance function $d(x,y)$,
  which is a function of $A(x)$ and $A(y)$ only,
  {\scshape Farthest Committees} under $d$ is FPT w.r.t. $n$.
\end{theorem}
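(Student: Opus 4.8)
The plan is to exploit, as in Theorem~\ref{thm:fpt-n-k}, that $d(x,y)$ depends only on the \emph{types} of $x$ and $y$, where the type of a candidate $c$ is its approval set $A(c)\subseteq V$. There are at most $t\le 2^n$ types; let $m_1,\dots,m_t$ be their multiplicities among the candidates and let $D_{ij}$ be the (precomputable) distance between types $i$ and $j$. Since candidates of one type are interchangeable, a committee is fully described by a vector $a=(a_1,\dots,a_t)$ with $0\le a_i\le m_i$ and $\sum_i a_i=k$, counting how many candidates of each type it contains; write $P$ for the set of such vectors. Moreover, for two committees with count vectors $a$ and $b$, the minimum-weight perfect matching that defines their distance is exactly a minimum-cost transportation plan with supplies $a$, demands $b$, and per-unit costs $D_{ij}$ (integrality of the transportation polytope makes the integral and fractional optima coincide). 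Denoting this optimal value by $\mathrm{OPT}(a,b)$, the task becomes $\max_{a,b\in P}\mathrm{OPT}(a,b)$, i.e.\ a \emph{max--min}.

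The number of vectors in $P$ can be exponential in $k$, so we cannot enumerate committees directly; instead I would linearize the inner minimization by LP duality. For fixed balanced marginals (here $\sum_i a_i=\sum_j b_j=k$) strong duality gives
\[
  \mathrm{OPT}(a,b)=\max_{(u,w)\in Q}\Big(\textstyle\sum_i a_i u_i+\sum_j b_j w_j\Big),
  \qquad
  Q=\{(u,w): u_i+w_j\le D_{ij}\text{ for all }i,j\}.
\]
Because the feasible region $P\times Q$ is a product, the two maximizations commute, and the objective equals $\max_{(u,w)\in Q}V(u,w)$ with $V(u,w):=\max_{a,b\in P}\big(\sum_i a_i u_i+\sum_j b_j w_j\big)$.

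The key point is that it suffices to evaluate $V$ at finitely many, $f(n)$-bounded, choices of $(u,w)$. After the harmless normalization $u_1=0$ (which does not change the objective, since $\sum_i a_i=\sum_j b_j$), the basic feasible dual solutions of a transportation problem correspond to spanning trees of the complete bipartite graph $K_{t,t}$, of which there are at most $t^{2t}$, a function of $n$ alone. On the one hand, for the optimal marginals the inner dual optimum is attained at such a basic solution $(u^\star,w^\star)$, so $\mathrm{OPT}\le V(u^\star,w^\star)$; on the other hand, every $(u,w)\in Q$ is dual-feasible, so $V(u,w)\le\max_{a,b}\mathrm{OPT}(a,b)=\mathrm{OPT}$ by weak duality. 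Hence $\mathrm{OPT}=\max_{\text{basic }(u,w)}V(u,w)$. Finally, $V(u,w)$ separates as $\max_a\sum_i a_i u_i+\max_b\sum_j b_j w_j$, because $P=P_a\times P_b$ with the two committees chosen independently (a candidate may lie in both, so the counts are unconstrained across the two sides), and each part is a linear function maximized over a box with a single cardinality constraint, solvable greedily in polynomial time.

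Putting the pieces together yields an algorithm that enumerates the $\le t^{2t}\le f(n)$ candidate dual vertices (solving a linear system and checking feasibility against all constraints $u_i+w_j\le D_{ij}$ for each), evaluates the separable $V$ greedily, and returns the maximum together with the realizing committees; its running time is $f(n)\cdot\poly(n,m)$, giving the claimed FPT result. I expect the main obstacle to be precisely the $\max$--$\min$ structure: the honest work is to argue (via LP duality and the spanning-tree description of transportation dual bases) that the outer maximization over the \emph{continuous} dual polyhedron can be replaced by a maximization over a vertex set whose cardinality is controlled by $n$ only, independent of $k$ and $m$.
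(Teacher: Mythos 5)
Your proof is correct, but it takes a genuinely different route from the paper's. Both arguments start from the same type reduction (candidates with identical approval sets are interchangeable, so there are $t\le 2^n$ relevant types), and both must confront the same obstacle: the objective is a maximum over committee pairs of a \emph{minimum} over matchings. The paper resolves this inside a single integer program: it introduces b-matching variables plus indicator variables and roughly $2^{5^n}$ ``cycle'' constraints that forbid a solution from using the more expensive of the two matchings contained in any alternating cycle, thereby forcing the matching in the objective to be a minimum-weight one, and then invokes Lenstra's FPT algorithm for ILPs with a bounded number of variables. You instead dualize the inner minimization: the committee distance is a transportation LP whose dual feasible region $Q$ is independent of the marginals, so the max--min collapses into a max--max; a weak/strong duality sandwich then shows the outer maximization over $Q$ can be restricted to vertices of the normalized dual polyhedron, these vertices are determined by spanning trees of $K_{t,t}$ (at most $t^{2t}$ of them, a function of $n$ alone), and for each fixed dual vertex the remaining problem separates into two greedily solvable linear maximizations over count vectors. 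This buys you two things: no ILP machinery at all---only transportation duality and a greedy step---and a better running time, roughly $2^{O(n2^n)}\cdot\poly(n,m)$ versus the paper's $2^{O(5^n)}$ (the paper explicitly declines to optimize its constants and leaves faster algorithms open). Two points deserve to be made explicit when you write this up: first, that the dual polyhedron intersected with $\{u_1=0\}$ is pointed and the inner dual optimum is finite, so an optimal \emph{vertex} exists (this is what licenses replacing the continuous outer maximization by the spanning-tree enumeration, with infeasible tree systems discarded); second, that the committees $(a',b')$ realizing $V$ at the best vertex $(u',w')$ are themselves optimal, which follows from the same sandwich, since $\mathrm{OPT}=V(u',w')\le \mathrm{OPT}(a',b')\le \mathrm{OPT}$ by weak duality---so the algorithm outputs farthest committees and not merely the optimal value. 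Both are one-line additions; the core argument is sound.
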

The complete proof is in Appendix~\ref{app:fpt-n}.
%
The main technique used to provide this result
is formulating FC as an Integer Linear Program (ILP) with the number
of integer variables and the number of constraints being a function of
$n$.  Then, it is enough to apply the result of
\citet{lenstra1983integer} for solving ILPs.  The main idea for the ILP is
to consider types of candidates---defined for every candidate $c$ as a
subset of voters that approve $c$, i.e., $A(c)$.  There are at most
$2^n$ types of candidates.  Notice that two candidates of the same
type have exactly the same distances to any other candidate.  Hence,
for each candidate type, by two integer variables we encode how many
candidates of this type are included in both committees.  Then, the
objective function is to maximize the weight of a b-matching (a
matching in which a vertex can be matched multiple times) according to
the capacities defined by the integer variables.  The main technical
issue we faced is ensuring that the achieved matching of maximum
weight is also the minimum-weight matching among the chosen
candidates.  We overcome this obstacle by adding $2^{5^n}$ constraints
(one for every cycle in a bipartite graph in which vertices represent
types of candidates).

\newcommand{\ax}{\multicolumn{1}{c}{\textbf{1.0}}}
\newcommand{\axl}{\multicolumn{1}{c|}{\textbf{1.0}}}
\begin{table*}[th]
\scalebox{0.74}{
\footnotesize
\centering
{
\setlength\tabcolsep{3.3pt}
\begin{tabular}{c|RRRR|RRRR|RRRR|RRRR|RRRR|RRRR}
\toprule
Rule & \multicolumn{4}{c|}{Resampling} & \multicolumn{4}{c|}{Disjoint} & \multicolumn{4}{c|}{Party-list} & \multicolumn{4}{c|}{1D} & \multicolumn{4}{c|}{2D} & \multicolumn{4}{c}{Pabulib} \\ \midrule
Equal Shares & \ax & \ax & \ax & \axl & \ax & \ax & \ax & \axl & \ax & \ax & \ax & \axl & \ax & \ax & \ax & \axl & \ax & \ax & \ax & \axl & \ax & \ax & \ax & \ax \\ \midrule
\seqphrag & \ax & 1.0 & \ax & \axl & \ax & 1.0 & \ax & \axl & \ax & 1.0 & \ax & \axl & \ax & 1.0 & \ax & \axl & \ax & 1.0 & \ax & \axl & \ax & 1.0 & \ax & \ax \\ \midrule
PAV & 1.0 & \ax & \ax & \axl & 1.0 & \ax & \ax & \axl & 1.0 & \ax & \ax & \axl & 1.0 & \ax & \ax & \axl & 1.0 & \ax & \ax & \axl & 1.0 & \ax & \ax & \ax \\ \midrule
seq-PAV & 1.0 & 1.0 & 1.0 & 1.0 & 1.0 & 1.0 & 1.0 & 1.0 & 1.0 & 1.0 & 1.0 & 1.0 & 1.0 & 1.0 & 1.0 & 1.0 & 1.0 & 1.0 & 1.0 & 1.0 & 1.0 & 1.0 & 1.0 & 1.0 \\ \midrule
SLAV & 1.0 & 1.0 & 1.0 & \axl & 0.95 & 1.0 & 1.0 & \axl & 0.36 & 1.0 & 1.0 & \axl & 0.88 & 1.0 & 1.0 & \axl & 0.99 & 1.0 & 1.0 & \axl & 1.0 & 1.0 & 1.0 & \ax \\ \midrule
seq-SLAV & 1.0 & 1.0 & 1.0 & 1.0 & 0.95 & 1.0 & 1.0 & 1.0 & 0.36 & 1.0 & 1.0 & 1.0 & 0.86 & 1.0 & 1.0 & 1.0 & 0.99 & 1.0 & 1.0 & 1.0 & 1.0 & 1.0 & 1.0 & 1.0 \\ \midrule
Greedy Monroe & 1.0 & 1.0 & \ax & \axl & 0.88 & 1.0 & \ax & \axl & 0.29 & 1.0 & \ax & \axl &  0.78 & 1.0 & \ax & \axl & 0.93 & 1.0 & \ax & \axl & 0.95 & 1.0 & \ax & \ax \\ \midrule
Geom.~2 & 1.0 & 1.0 & 1.0 & \axl & 0.95 & 1.0 & 1.0 & \axl & 0.14 & 0.55 & 0.55 & \axl & 1.0 & 1.0 & 1.0 & \axl & 1.0 & 1.0 & 1.0 & \axl & 1.0 & 1.0 & 1.0 & \ax \\ \midrule
Geom.~3 & 0.99 & 1.0 & 1.0 & \axl & 0.87 & 0.99 & 0.99 & \axl & 0.03 & 0.3 & 0.3 & \axl & 0.88 & 1.0 & 1.0 & \axl & 0.99 & 1.0 & 1.0 & \axl & 0.97 & 1.0 & 1.0 & \ax \\ \midrule
Geom.~4 & 0.96 & 1.0 & 1.0 & \axl & 0.82 & 0.98 & 0.98 & \axl & 0.02 & 0.23 & 0.23 & \axl & 0.77 & 1.0 & 1.0 & \axl & 0.96 & 1.0 & 1.0 & \axl & 0.93 & 1.0 & 1.0 & \ax \\ \midrule
Geom.~5 & 0.93 & 1.0 & 1.0 & \axl & 0.79 & 0.98 & 0.98 & \axl & 0.01 & 0.2 & 0.2 & \axl & 0.68 & 1.0 & 1.0 & \axl & 0.93 & 1.0 & 1.0 & \axl & 0.88 & 1.0 & 1.0 & \ax \\ \midrule
SAV & 1.0 & 1.0 & 1.0 & 1.0 & 0.5 & 0.64 & 0.64 & 0.64 & 0.06 & 0.08 & 0.08 & 0.08 & 0.0 & 0.01 & 0.01 & 0.01 & 0.0 & 0.06 & 0.06 & 0.06 & 0.84 & 1.0 & 1.0 & 1.0 \\ \midrule
AV & 1.0 & 1.0 & 1.0 & 1.0 & 0.49 & 0.63 & 0.63 & 0.63 & 0.06 & 0.08 & 0.08 & 0.08 & 0.0 & 0.01 & 0.01 & 0.01 & 0.01 & 0.12 & 0.12 & 0.12 & 0.91 & 0.98 & 0.98 & 0.98 \\ \midrule
Minimax AV & 0.71 & 0.99 & 0.99 & 0.99 & 0.46 & 0.89 & 0.89 & 0.98 & 0.0 & 0.05 & 0.05 & 1.0 & 0.0 & 0.09 & 0.09 & 0.09 & 0.02 & 0.28 & 0.28 & 0.28 & 0.59 & 0.86 & 0.86 & 0.86 \\ \midrule
seq-CC & 0.97 & 1.0 & 1.0 & \axl & 0.62 & 0.88 & 0.88 & \axl & 0.0 & 0.06 & 0.06 & \axl & 0.23 & 0.99 & 0.99 & \axl & 0.47 & 0.99 & 0.99 & \axl & 0.58 & 1.0 & 1.0 & \ax \\ \midrule
CC & 0.21 & 0.61 & 0.65 & \axl  & 0.62 & 0.85 & 0.86 & \axl & 0.0 & 0.04 & 0.04 & \axl & 0.23 & 0.98 & 0.99 & \axl & 0.42 & 0.98 & 0.99 & \axl & 0.56 & 1.0 & 1.0 & \ax \\ \midrule
\end{tabular}}
}
\caption{Fraction of instances satisfying priceability, EJR, PJR, and JR---presented respectively (from left to right). If $1.0$ is printed in bold, the corresponding axiom is guaranteed by the voting rule.}
\label{tab:priceability}
\end{table*}

\section{Map of Rules}\label{sec:map-rulez}
In this section, we present our \textit{Map of Multiwinner Voting
  Rules}. The map is constructed in the following way. First, we
generate a number of elections from several statistical cultures
(details are described later in the text). Second, we compute winning
committees under the considered voting rules. We impose resoluteness,
i.e., each rule outputs exactly one winning committee. Third, we compute the
distances between winning committees using the Jaccard distance (we
chose it over Hamming due to the way it interprets the lack of
an approval for a candidate). 
For each election, we normalize the Jaccard distance by dividing it by the largest distance between two committees, outputted by our voting rules, for that election.
For each two rules and each statistical
culture, we compute the average distance between the rules'
committees; this gives us distance matrices for the cultures. We embed these
matrices in 2D-Euclidean space using a variant of the 
algorithm of \citet{kam-kaw:j:embedding}, as implemented by \citet{mt:sapala}. If two rules output similar winning committees, then we
view these two rules as similar; the algorithm places such rules close
to one another on the map.\footnote{The code for the experiments is available at \url{https://github.com/Project-PRAGMA/Map-of-Rules-IJCAI-2023}.}
While different runs of the algorithm (on the same distance data) may give somewhat different maps, the outputs
tend to be similar, and we believe that the maps give good intuitions regarding the relative distances between the rules.

\subsection{Experimental Setup}
We generated 6000 instances with 100 candidates and 100 voters from
the six following statistical cultures (1000 elections per culture):
1D-Euclidean with $r=0.05$, 2D-Euclidean with $r=0.2$, resampling with
$p=0.1$ and $\phi \in \{0, \frac{1}{999}, \frac{2}{999},\dots,\frac{998}{999},1\}$, disjoint with $p=0.1$, $\phi \in \{0, \frac{1}{999}, \frac{2}{999},\dots,\frac{998}{999},1\}$,
and $g=10$, party-list with $g=10$, and the Pabulib model. 
For all instances, we use a committee size of $k=10$.

\subsection{Experimental Results}
Results for priceability, EJR, PJR, and JR are presented in
Table~\ref{tab:priceability}. Even though only Equal Shares and
\seqphrag \ formally satisfy priceability, for all instances that we
generated, the committees provided by PAV and seq-PAV also satisfied
priceability.
Moreover, \seqphrag{} and seq-PAV satisfy EJR on all instances despite
\seqphrag{} only guaranteeing PJR \cite{mathprog-phragmen} and seq-PAV not even JR \cite{justifiedRepresentation}.
These results indicate that instances witnessing that, e.g., 
seq-PAV fails EJR are rare. In turn, we can conclude that rules 
such as seq-PAV have stronger proportionality properties than their axiomatic analysis reveals.

Let us briefly discuss the existence of cohesive groups in our data, i.e., groups $N\subseteq V$ with $|N|\geq \ell\cdot\frac{n}{k}$ that jointly approves at least $\ell$
  common candidates.
If, for some reason, our data did not contain cohesive groups for $\ell\geq 2$, then the notions of EJR, PJR and JR would coincide; this would render this comparison meaningless.
However, this is not the case. 
For example, for the Disjoint model, 311 of 1000 instances have cohesive groups with $\ell=2$; for the Resampling model this holds for 717 of 1000 instances. Further, for some models, one can see that JR is significantly easier to satisfy than, say, EJR.
This is particularly apparent for CC, which is guaranteed to satisfy JR but fails both PJR and EJR. 
Finally, we note that our results also support the conclusion by \citet{bre-fal-kac-nie2019:experimental_ejr} that---under many statistical cultures---committees satisfying JR often also satisfy PJR and EJR.

For three of our models, that is, disjoint, 1D-Euclidean, and Pabulib,
we present maps of multiwinner voting rules; see
Figure~\ref{fig:maps} (each map is based on $1000$ elections
generated for the respective culture). We start by analyzing the left
map for the disjoint model. On the
``west'' side of the map, we have the CC voting rule; this area can be
interpreted as representing \textit{diversity}: the rules located in
this area select committees that consist of diverse candidates. On the
``east'' side of the map we have AV; this area can be interpreted as
representing \textit{individual excellence}: the rules located here
select committees that consist of individually best candidates, not
taking into account whether the selected candidates are similar to
each other or not. In the central part of the map (highlighted in red), 
we have numerous
rules considered to be proportional, such as the PAV rule or
\seqphrag. 
Note that Geometric rules are close to the ``cluster'' of proportional rules, but
their distance increases for larger $p$-values.
Finally, in the ``south'' area we have Minimax AV, which is
different from all the other rules.
Importantly, this separation of diverse, proportional and individually excellent rules arises naturally in our model.

As for the central map, which is
based on 1D-Euclidean elections, the overall location of the rules is
similar. However, the ``proportional cluster'' is
shifted toward CC. Moreover, SAV is more different from AV than in the
previous map. 
As 1D Euclidean is a simpler model, we generally see fewer details in this map.

Finally, consider the map based on real-life elections from Pabulib.
Here, we see more pronounced differences (as in the disjoint model).
Interestingly, PAV and seq-PAV are almost indistinguishable here.
Generally, it can be observed that the relative position of voting rules does not
vary greatly across different probability models. In particular, the
proportional cluster can be observed in all maps.


\begin{figure*}[t]
\centering
    \includegraphics[width=15.2cm, trim={0.2cm 0.6cm 0.2cm 1.2cm}, clip]{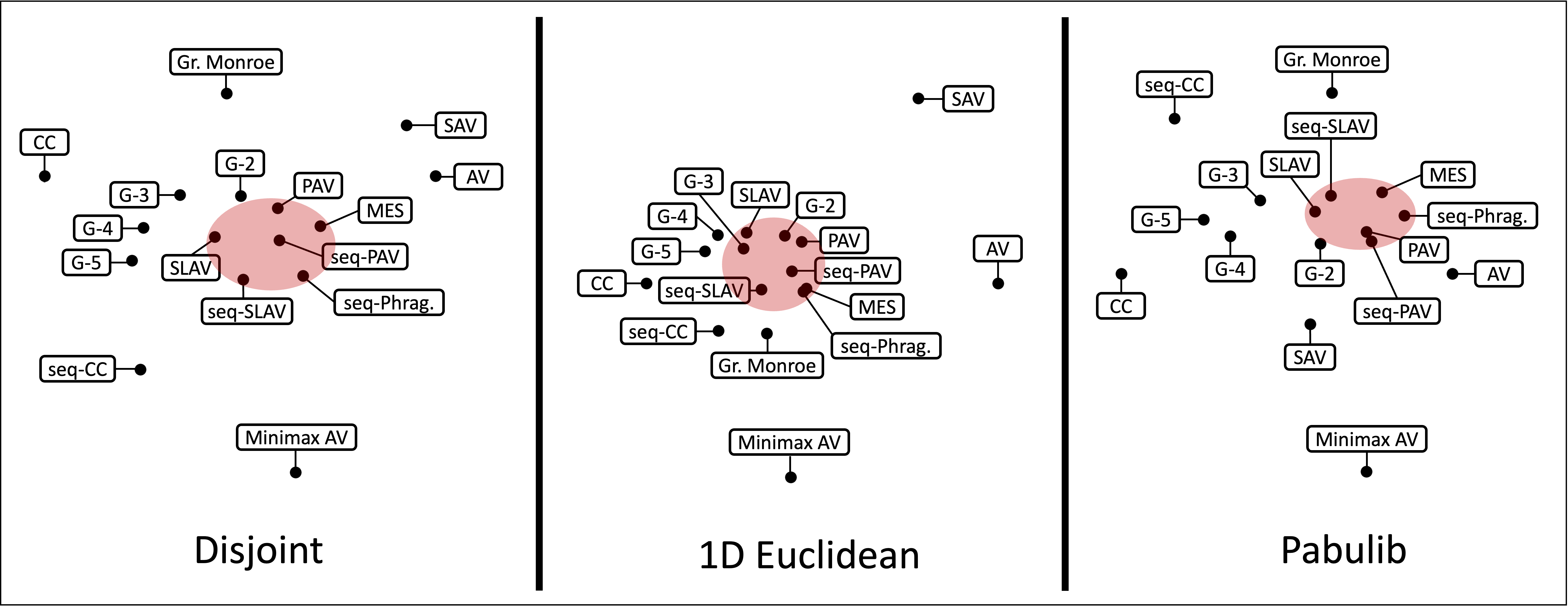}
\caption{Maps of multiwinner rules. Label ``G-$p$'' refers to the $p$-Geometric rule. The red area highlights proportional rules.}
\label{fig:maps}
\end{figure*}

\section{Summary}
We introduced a framework for comparing multiwinner voting rules based on the committees they select.
This framework gives a principled approach to identifying 
similarity of voting rules. When contrasted with axiomatic analysis,
we see that some rules (\seqphrag{} and seq-PAV)
are not only close to strongly proportional rules (Equal Shares and PAV),
but are also indistinguishable from an axiomatic point of view in
our dataset.
This indicates that our  
distance-based 
approach indeed
offers a finer view of multiwinner rules,
and thus complements a precise axiomatic analysis.


\section*{Acknowledgements}
Martin Lackner was supported by the Austrian Science Fund (FWF), research grant P31890.
Krzysztof Sornat was supported by
the SNSF Grant 200021\_200731/1.
We thank Jannik Peters for helpful feedback and
Pasin Manurangsi for a discussion on the hardness of the {\scshape Balanced Biclique} problem. 
This project has received funding from the European Research Council
(ERC) under the European Union’s Horizon 2020 research and innovation
programme (grant agreement No 101002854).
\begin{center}
  \includegraphics[width=3cm]{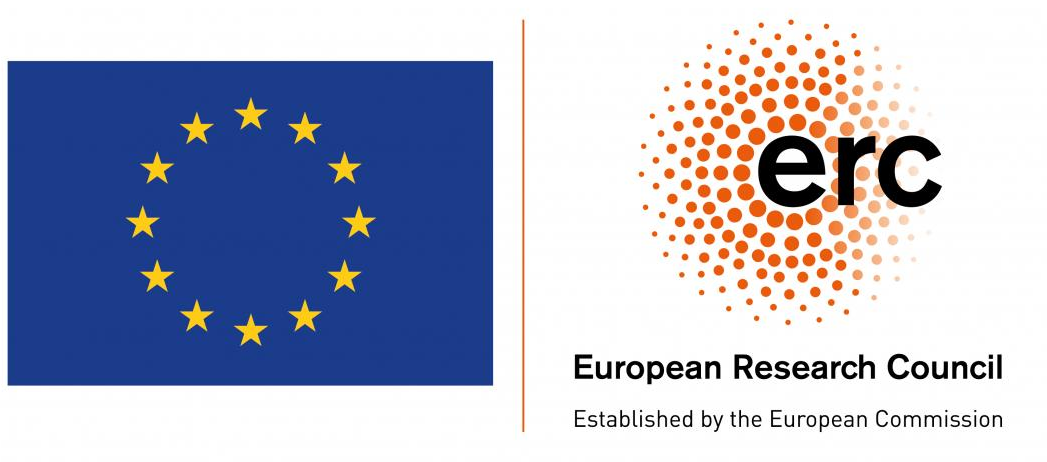}
\end{center}

\bibliographystyle{named}
\bibliography{bib}

\begin{thebibliography}{}

\bibitem[\protect\citeauthoryear{Aziz \bgroup \em et al.\egroup
  }{2017}]{justifiedRepresentation}
H.~Aziz, M.~Brill, V.~Conitzer, E.~Elkind, R.~Freeman, and T.~Walsh.
\newblock Justified representation in approval-based committee voting.
\newblock {\em Social Choice and Welfare}, 48(2):461--485, 2017.

\bibitem[\protect\citeauthoryear{Berg}{1985}]{berg1985paradox}
S.~Berg.
\newblock Paradox of voting under an urn model: {The} effect of homogeneity.
\newblock {\em Public Choice}, 47(2):377--387, 1985.

\bibitem[\protect\citeauthoryear{Boehmer \bgroup \em et al.\egroup
  }{2021}]{boe-bre-fal-nie-szu:c:compass}
N.~Boehmer, R.~Bredereck, P.~Faliszewski, R.~Niedermeier, and S.~Szufa.
\newblock Putting a compass on the map of elections.
\newblock In {\em Proceedings of IJCAI-2021}, 2021.

\bibitem[\protect\citeauthoryear{Brams \bgroup \em et al.\egroup
  }{2007}]{minimaxProcedure}
S.~J. Brams, D.~M. Kilgour, and M.~R. Sanver.
\newblock A minimax procedure for electing committees.
\newblock {\em Public Choice}, 132(3--4):401--420, 2007.

\bibitem[\protect\citeauthoryear{Brandt and Seedig}{2013}]{BS13}
F.~Brandt and H.~G. Seedig.
\newblock A tournament of order 24 with two disjoint teq-retentive sets.
\newblock {\em arXiv preprint arXiv:1302.5592}, 2013.

\bibitem[\protect\citeauthoryear{Brandt \bgroup \em et al.\egroup
  }{2013}]{BCKLNSST13}
F.~Brandt, M.~Chudnovsky, I.~Kim, G.~Liu, S.~Norin, A.~Scott, P.~D. Seymour,
  and S.~Thomass{\'{e}}.
\newblock A counterexample to a conjecture of {S}chwartz.
\newblock {\em Social Choice and Welfare}, 40(3):739--743, 2013.

\bibitem[\protect\citeauthoryear{Bredereck \bgroup \em et al.\egroup
  }{2019}]{bre-fal-kac-nie2019:experimental_ejr}
R.~Bredereck, P.~Faliszewski, A.~Kaczmarczyk, and R.~Niedermeier.
\newblock An experimental view on committees providing justified
  representation.
\newblock In {\em Proceedings of IJCAI-2019}, pages 109--115, 2019.

\bibitem[\protect\citeauthoryear{Brill \bgroup \em et al.\egroup
  }{2018}]{bri-las-sko:c:apportionment}
M.~Brill, J.-F. Laslier, and P.~Skowron.
\newblock Multiwinner approval rules as apportionment methods.
\newblock {\em Journal of Theoretical Politics}, 30(3):358--382, 2018.

\bibitem[\protect\citeauthoryear{Brill \bgroup \em et al.\egroup
  }{2023}]{mathprog-phragmen}
M.~Brill, R.~Freeman, S.~Janson, and M.~Lackner.
\newblock Phragm{\'e}n's voting methods and justified representation.
\newblock {\em Mathematical Programming}, 2023.

\bibitem[\protect\citeauthoryear{Cevallos and
  Stewart}{2021}]{cevallos2020verifiably}
A.~Cevallos and A.~Stewart.
\newblock A verifiably secure and proportional committee election rule.
\newblock In {\em Proceedings of the 3rd ACM Conference on Advances in
  Financial Technologies}, pages 29--42, 2021.

\bibitem[\protect\citeauthoryear{Deza and Deza}{2016}]{deza2009encyclopedia}
M.~M. Deza and E.~Deza.
\newblock {\em Encyclopedia of distances}.
\newblock Springer, fourth edition, 2016.

\bibitem[\protect\citeauthoryear{Elkind \bgroup \em et al.\egroup
  }{2017}]{elk-fal-sko-sli:c:multiwinner-rules}
E.~Elkind, P.~Faliszewski, P.~Skowron, and A.~Slinko.
\newblock Properties of multiwinner voting rules.
\newblock {\em Social Choice and Welfare}, 48(3):599--632, 2017.

\bibitem[\protect\citeauthoryear{Enelow and Hinich}{1984}]{enelow1984spatial}
J.~Enelow and M.~Hinich.
\newblock {\em The Spatial Theory of Voting: {An} Introduction}.
\newblock Cambridge University Press, 1984.

\bibitem[\protect\citeauthoryear{Enelow and Hinich}{1990}]{enelow1990advances}
J.~Enelow and M.~Hinich.
\newblock {\em Advances in the Spatial Theory of Voting}.
\newblock Cambridge University Press, 1990.

\bibitem[\protect\citeauthoryear{Faliszewski and
  Talmon}{2018}]{fal-tal:balancing-cc}
P.~Faliszewski and N.~Talmon.
\newblock Between proportionality and diversity: {B}alancing district sizes
  under the {Chamberlin-Courant} rule.
\newblock In {\em Proceedings of AAMAS-2018}, pages 14--22, 2018.

\bibitem[\protect\citeauthoryear{Faliszewski \bgroup \em et al.\egroup
  }{2017a}]{FaliszewskiSST17}
P.~Faliszewski, P.~Skowron, A.~Slinko, and N.~Talmon.
\newblock Multiwinner rules on paths from $k$-{B}orda to
  {C}hamberlin-{C}ourant.
\newblock In {\em Proceedings of IJCAI-2017}, pages 192--198, 2017.

\bibitem[\protect\citeauthoryear{Faliszewski \bgroup \em et al.\egroup
  }{2017b}]{FSST-trends}
P.~Faliszewski, P.~Skowron, A.~Slinko, and N.~Talmon.
\newblock Multiwinner voting: {A} new challenge for social choice theory.
\newblock In Ulle Endriss, editor, {\em Trends in Computational Social Choice},
  chapter~2, pages 27--47. AI Access, 2017.

\bibitem[\protect\citeauthoryear{Faliszewski \bgroup \em et al.\egroup
  }{2023}]{fal-fli-pet-pie-sko-sto-szu-tal:c:participatory-budgeting-tools}
P.~Faliszewski, J.~Flis, D.~Peters, G.~Pierczynski, P.~Skowron, D.~Stolicki,
  S.~Szufa, and N.~Talmon.
\newblock Participatory budgeting: Data, tools and analysis.
\newblock pages 2667--2674, 2023.

\bibitem[\protect\citeauthoryear{Garey and Johnson}{1979}]{GareyJ79}
M.~R. Garey and D.~S. Johnson.
\newblock {\em Computers and Intractability: {A} Guide to the Theory of
  {NP}-Completeness}.
\newblock W. H. Freeman, 1979.

\bibitem[\protect\citeauthoryear{Gawron and Faliszewski}{2022}]{gawronusing}
G.~Gawron and P.~Faliszewski.
\newblock Using multiwinner voting to search for movies.
\newblock In {\em Proceedings of EUMAS-2022}, pages 134--151, 2022.

\bibitem[\protect\citeauthoryear{Godziszewski \bgroup \em et al.\egroup
  }{2021}]{god-bat-sko-fal:c:2d}
M.~Godziszewski, P.~Batko, P.~Skowron, and P.~Faliszewski.
\newblock An analysis of approval-based committee rules for {2D-Euclidean}
  elections.
\newblock In {\em Proceedings of AAAI-2021}, pages 5448--5455, 2021.

\bibitem[\protect\citeauthoryear{Israel and Brill}{2021}]{israel2021dynamic}
J.~Israel and M.~Brill.
\newblock Dynamic proportional rankings.
\newblock In {\em Proceedings of IJCAI-2021}, pages 261--267, 2021.

\bibitem[\protect\citeauthoryear{Kamada and Kawai}{1989}]{kam-kaw:j:embedding}
T.~Kamada and S.~Kawai.
\newblock An algorithm for drawing general undirected graphs.
\newblock {\em Information Processing Letters}, 31(1):7--15, 1989.

\bibitem[\protect\citeauthoryear{Lackner and
  Skowron}{2021}]{lac-sko:j;multiwinner-approval-axioms}
M.~Lackner and P.~Skowron.
\newblock Consistent approval-based multi-winner rules.
\newblock {\em Journal of Economic Theory}, 192:105173, 2021.

\bibitem[\protect\citeauthoryear{Lackner and
  Skowron}{2023}]{lac:sko:t:approval-book}
M.~Lackner and P.~Skowron.
\newblock {\em Multi-Winner Voting with Approval Preferences - Artificial
  Intelligence, Multiagent Systems, and Cognitive Robotics}.
\newblock Springer Briefs in Intelligent Systems. Springer, 2023.

\bibitem[\protect\citeauthoryear{Lackner \bgroup \em et al.\egroup
  }{2023}]{abcvoting}
M.~Lackner, P.~Regner, and B.~Krenn.
\newblock abcvoting: {A} {P}ython package for approval-based multi-winner
  voting rules.
\newblock {\em Journal of Open Source Software}, 8(81):4880, 2023.

\bibitem[\protect\citeauthoryear{Lenstra}{1983}]{lenstra1983integer}
H.~W. Lenstra.
\newblock Integer programming with a fixed number of variables.
\newblock {\em Mathematics of Operations Research}, 8(4):538--548, 1983.

\bibitem[\protect\citeauthoryear{Lin}{2018}]{Lin18}
B.~Lin.
\newblock The parameterized complexity of the \emph{k}-biclique problem.
\newblock {\em Journal of the ACM}, 65(5):34:1--34:23, 2018.

\bibitem[\protect\citeauthoryear{Lu and Boutilier}{2011}]{budgetSocialChoice}
T.~Lu and C.~Boutilier.
\newblock Budgeted social choice: {F}rom consensus to personalized decision
  making.
\newblock In {\em Proceedings of IJCAI-2011}, pages 280--286, 2011.

\bibitem[\protect\citeauthoryear{{McCabe-Dansted} and
  Slinko}{2006}]{mcc-sli:j:similarity-rules}
J.~{McCabe-Dansted} and A.~Slinko.
\newblock Exploratory analysis of similarities between social choice rules.
\newblock {\em Group Decision and Negotiation}, 15:77--107, 2006.

\bibitem[\protect\citeauthoryear{Peters and
  Skowron}{2020}]{peters2020proportionality}
D.~Peters and P.~Skowron.
\newblock Proportionality and the limits of welfarism.
\newblock In {\em Proceedings of EC-2020}, pages 793--794, 2020.

\bibitem[\protect\citeauthoryear{Renwick and Pilet}{2016}]{renwick16}
A.~Renwick and J.~B. Pilet.
\newblock {\em Faces on the Ballot: {T}he Personalization of Electoral Systems
  in Europe}.
\newblock Oxford University Press, 2016.

\bibitem[\protect\citeauthoryear{S{\'a}nchez-Fern{\'a}ndez \bgroup \em et
  al.\egroup }{2017}]{Sanchez-Fernandez2017Proportional}
L.~S{\'a}nchez-Fern{\'a}ndez, E.~El\-kind, M.~Lackner, N.~Fern{\'a}ndez, J.~A.
  Fisteus, P.~Basanta Val, and P.~Skowron.
\newblock Proportional justified representation.
\newblock In {\em Proceedings of AAAI-2017}, pages 670--676, 2017.

\bibitem[\protect\citeauthoryear{Sapała}{2022}]{mt:sapala}
K.~Sapała.
\newblock Algorithms for embedding metrics in {E}uclidean spaces.
\newblock Master's thesis, AGH University of Science and Technology, 2022.

\bibitem[\protect\citeauthoryear{Skowron \bgroup \em et al.\egroup
  }{2015}]{sko-fal-sli:j:multiwinner}
P.~Skowron, P.~Faliszewski, and A.~Slinko.
\newblock Achieving fully proportional representation: {Approximability}
  result.
\newblock {\em Artificial Intelligence}, 222:67--103, 2015.

\bibitem[\protect\citeauthoryear{Subiza and
  Peris}{2017}]{subiza2017representative}
B.~Subiza and J.~E. Peris.
\newblock A representative committee by approval balloting.
\newblock {\em Group Decision and Negotiation}, 26(5):1029--1040, 2017.

\bibitem[\protect\citeauthoryear{Szufa \bgroup \em et al.\egroup
  }{2020}]{szu-fal-sko-sli-tal:c:map}
S.~Szufa, P.~Faliszewski, P.~Skowron, A.~Slinko, and N.~Talmon.
\newblock Drawing a map of elections in the space of statistical cultures.
\newblock In {\em Proceedings of AAMAS-2020}, pages 1341--1349, 2020.

\bibitem[\protect\citeauthoryear{Szufa \bgroup \em et al.\egroup
  }{2022}]{SFJLSS22}
S.~Szufa, P.~Faliszewski, L.~Janeczko, M.~Lackner, A.~Slinko, K.~Sornat, and
  N.~Talmon.
\newblock How to sample approval elections?
\newblock In {\em Proceedings of IJCAI-2022}, pages 496--502, 2022.

\end{thebibliography}



\clearpage

\appendix

\begin{center}
  \LARGE Appendix
\end{center}





\section{Proof of Theorem~\ref{thm:fpt-n}}
\label{app:fpt-n}

We will formulate {\scshape Farthest Committees} (FC) as an Integer Linear Program (ILP).

First, we define a set of \emph{types of candidates} as $T = \{t_{V'}: V' \subseteq V \}$, i.e., every subset of voters $V' \subseteq V$ defines a candidate type $t_{V'} \in T$.
We have $|T| = 2^n$.
Naturally, a candidate $c$ has type $t_{A(c)}$.
By $c \in t$ we denote that $c$ has type $t$ and by $|t|$ we denote the number of candidates of type~$t$.
Notice that two candidates of the same type have exactly the same distances to every other candidate.
Formally, for every $a,b \in t$ and $c \in C$ we have $d(a,c) = d(b,c)$.
This holds because of the assumption that $d(x,y)$ is a function of $A(x)$ and $A(y)$ only.

We define two \emph{type-variables} $y_t^L, y_t^R$ for each candidate type $t$.
$y_t^L$ and $y_t^R$ encode how many candidates of type $t$ are included in the first committee and in the second committee respectively.
\begin{align}
  y_t^L, y_t^R \in \{0,1, \dots, \min\{k,|t|\} \} \quad\quad \forall t \in T. \label{eq:y-def}
\end{align}
There are $2^{n+1}$ such type-variables.

We add two constraints meaning that we choose exactly $k$ candidates to each of the committees.
\begin{align}
  \sum_{t \in T} y_t^R = \sum_{t \in T} y_t^L = k.\label{eq:cardinality}
\end{align}
Using thus defined and constrained type-variables, we can define corresponding two committees as an outcome for FC (this is because candidates of the same type are indistinguishable from the objective function's point of view).

Next, we need to encode the matching of the candidates.
For this, we define \emph{edge-variables} $x_{(r,s)}$, for every $r,s \in T$, that encode the number of candidates of type $r$ matched with the same number of candidates of type $s$.
\begin{align}
  x_e \in \{0,1, \dots, k\} \quad\quad \forall e \in T \times T. \label{eq:xe-def}
\end{align}
There are $4^n$ such edge-variables.
The values of the edge-variables define a \emph{b-matching}.

We write the following \emph{b-matching constraints} in order to relate the number of chosen candidates of a certain type and the number of matched candidates of this type.
\begin{align}
  \sum_{r \in T} x_{(t,r)} = y_t^L \quad\quad \forall t \in T,\label{eq:b-matching-l}\\
  \sum_{r \in T} x_{(r,t)} = y_t^R \quad\quad \forall t \in T.\label{eq:b-matching-r}
\end{align}
There are $2^{n+1}$ such constraints.
We say that a b-matching $(\hat{x}_e)_{e \in T \times T}$ is \emph{consistent} with the type-variables values $(\hat{y}_t)_{t \in T}$ if the b-matching constraints are satisfied.

Next we define an objective function of our ILP (although we will add more constraints later).
It is the weight of a b-matching induced by the edge-variables.
\begin{align}
  \mathrm{maximize} \sum_{e \in T \times T} x_e \cdot w_e,\label{eq:ilp-obj}
\end{align}
where $w_{(r,s)}, r,s \in T$ is the distance between candidates, where one is of type $r$ and one is of type $s$.

The ILP defined above is not complete.
In its current form, the output would be the maximum weight matching with exactly $k$ edges, but what we need to achieve is maximum over the choice of the committees of the minimum weight perfect matching between candidates from these committees.
In order to obtain proper output for FC, we keep the objective function exactly as it is defined in Formula~\eqref{eq:ilp-obj}, but we will add more constraints in order to make sure that the chosen b-matching between the candidate types (defined by edge-variables) is a minimum weight b-matching between chosen types of candidates (defined by type-variables).

It is enough to forbid choosing expensive edges (by restricting edge-variables) in case one could change them for cheaper ones while keeping the same values of type-variables.
We will use Lemma~\ref{lem:no-aug-cycles} in order to define proper constraints, but first, we introduce some notation regarding cycles in a bipartite graph.

A cycle $\gamma$ in a complete bipartite graph with parts $(L,R) = (T,T)$ is a sequence of alternating distinct vertices from $L$ and $R$, i.e.:
$$\gamma = (c_1^L, c_1^R, c_2^L, c_2^R, \dots, c_\ell^L, c_\ell^R),$$
where $c_i^L \in L, c_i^R \in R$ and $c_i^X \neq c_j^X$ for every $i \neq j$ and $X \in \{L,R\}$.\footnote{In order to have unique representation of a cycle we assume that, for arbitrary fixed indexing of vertices (types of candidates), the first (resp. the second) vertex in a cycle is an element from $L$ (resp. $R$) with the smallest index.}
A cycle $\gamma$ can be also represented through an edge notation, i.e., $\gamma = M_{LR}^\gamma \cup M_{RL}^\gamma$, where
$M_{LR}^\gamma = \{ \{c_1^L, c_1^R\}, \{c_2^L, c_2^R\}, \dots, \{c_\ell^L, c_\ell^R\} \}$
and
$M_{RL}^\gamma = \{ \{c_1^R, c_2^L\}, \{c_2^R, c_3^L\}, \dots, \{c_{\ell-1}^R, c_\ell^L\}, \{c_\ell^R, c_1^L\} \}$.
Notice that both $M_{LR}^\gamma$ and $M_{RL}^\gamma$ are matchings.
We denote the weight of a matching $M$ as $w(M) = \sum_{e \in M} w(e)$.
Let $\Gamma$ be a set of all cycles in $(L,R)$.
\begin{lemma}\label{lem:Gamma}
  It holds that $|\Gamma| \leq (2^n)^{(2^{n+1} + 1)} \leq 2^{5^n}$.
\end{lemma}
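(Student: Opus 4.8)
The plan is to bound $|\Gamma|$ by a crude counting argument: I describe each cycle by its ordered list of vertices, count the lists of each fixed length, and sum over all admissible lengths; the resulting closed form then reduces to an elementary numerical inequality.

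First I would note that every cycle $\gamma = (c_1^L, c_1^R, \dots, c_\ell^L, c_\ell^R)$ uses exactly $\ell$ distinct vertices from $L$ and $\ell$ distinct vertices from $R$. Since $|L| = |R| = |T| = 2^n$, its half-length satisfies $1 \le \ell \le 2^n$. For a fixed $\ell$, a cycle is determined once we list its $2\ell$ vertices in order, and each entry is one of the $2^n$ elements of $T$; hence there are at most $(2^n)^{2\ell}$ such lists. The canonical-representation convention from the footnote only decreases this count, so $(2^n)^{2\ell}$ remains a valid upper bound on the number of cycles of half-length $\ell$.

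Next I would sum over all admissible half-lengths. As the largest term $(2^n)^{2 \cdot 2^n} = (2^n)^{2^{n+1}}$ dominates, and there are at most $2^n$ values of $\ell$, this gives
\[
|\Gamma| \;\le\; \sum_{\ell=1}^{2^n} (2^n)^{2\ell} \;\le\; 2^n \cdot (2^n)^{2^{n+1}} \;=\; (2^n)^{2^{n+1}+1},
\]
which is the first claimed inequality.

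Finally, the bound $(2^n)^{2^{n+1}+1} \le 2^{5^n}$ is equivalent, after taking base-$2$ logarithms, to $n\,(2^{n+1}+1) \le 5^n$, i.e.\ $2n \cdot 2^n + n \le 5^n$. I expect this elementary estimate to be the only step needing care, as it is \emph{tight} at $n=1$ (both sides equal $5$) and so cannot survive a crude loosening there. For $n \ge 2$ I would use $n \le 2^n$ to get $2n \cdot 2^n + n \le 2^n(2n+1)$, then observe $2n+1 \le (5/2)^n$ (which holds at $n=2$ and persists since the right-hand side grows exponentially while the left grows linearly); multiplying by $2^n$ yields $2^n(2n+1) \le 5^n$. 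Combined with the $n=1$ equality, this establishes $n(2^{n+1}+1) \le 5^n$ for all $n \ge 1$, and hence $|\Gamma| \le 2^{5^n}$, completing the proof.
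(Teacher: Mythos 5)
Your proof is correct, and its first half---describing each cycle by its ordered vertex list, bounding the number of cycles of half-length $\ell$ by $(2^n)^{2\ell}$, and summing over $\ell \le 2^n$ to get $(2^n)^{2^{n+1}+1}$---is exactly the paper's argument. Where you genuinely diverge is the second inequality $(2^n)^{(2^{n+1}+1)} \le 2^{5^n}$. The paper never verifies it for $n=1$: it argues that {\scshape Farthest Committees} with a single voter is polynomial-time solvable, concludes that one may assume $n \ge 2$, and then simply asserts the inequality for $n \ge 2$ without calculation. You instead prove the equivalent numerical statement $n(2^{n+1}+1) \le 5^n$ for every $n \ge 1$: equality at $n=1$, and for $n \ge 2$ the chain $2n\cdot 2^n + n \le 2^n(2n+1) \le 2^n\left(\tfrac{5}{2}\right)^n = 5^n$, where the first step uses $n \le 2^n$ and the second follows from an easy induction on $2n+1 \le \left(\tfrac{5}{2}\right)^n$ (all of which checks out). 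Your route buys a self-contained proof of the lemma exactly as stated, as a purely combinatorial fact independent of the surrounding algorithmic context; the paper's route keeps the arithmetic trivial but formally leaves the $n=1$ case of the lemma unaddressed (it happens to hold, with equality, as you observe). Either suffices for the role the lemma plays in Theorem~\ref{thm:fpt-n}, but yours is the cleaner statement-level proof.
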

\begin{proof}
  In a bipartite graph a cycle has even length and the longest cycle contains all $2^{n+1}$ vertices.
  There are at most $(2^n)^{(2i)}$ many cycles of length $2i$, where $i \in \{1,2,\dots,2^n\}$.
  We give an upperbound for $|\Gamma|$ as follows.
  $$|\Gamma| \leq \sum_{i=1}^{2^n} (2^n)^{(2i)} \leq \sum_{i=1}^{2^n} (2^n)^{(2^{n+1})} \leq (2^n)^{(2^{n+1}+1)}.$$
  This shows the first inequality in the lemma statement.
  
  The case $n=1$ is polynomial time solvable by taking to one committee the maximum number of approved candidates and taking to the other committee the maximal number of disapproved candidates.
  Therefore $n \geq 2$, which implies that
  $(2^n)^{(2^{n+1} + 1)} \leq 2^{5^n}$.
\end{proof}
In Lemma~\ref{lem:Gamma} we gave an upperbound for $|\Gamma|$ without optimizing constants as the main message of our result is that our algorithm has running time with double-exponential dependence on $n$.
Notice that one can use an algorithm from Theorem~\ref{thm:fpt-n-k} for $n$ upperbounded by a large constant (e.g. $100$) in order to provide a polynomial time algorithm (with a degree above $2^{100}$ in the case of $n \leq 100$).
Then one can assume that $n>100$ and provide better constants in the exponents immediately.
Instead of optimizing constants it is an open question whether a single-exponential algorithm exists.

In Lemma~\ref{lem:no-aug-cycles}, we show that a b-matching has minimum weight (among b-matchings consistent with the given type-variables values) if and only if there is no cycle $\gamma \in \Gamma$ such that the b-matching uses the more expensive matching among $\{ M_{LR}^\gamma, M_{RL}^\gamma\}$.
\begin{lemma}\label{lem:no-aug-cycles}
  For given values of type-variables $(\hat{y}_t)_{t \in T}$ and a b-matching $B = (\hat{x}_e)_{e \in T \times T}$ which is consistent with $(\hat{y}_t)_{t \in T}$, we define a matching
  $M_B = (\min\{1,\hat{x}_e\})_{e \in T \times T}$.
  The following two statements are equivalent:
  \begin{enumerate}
      \item $B$ has minimum weight among b-matchings consistent with $(\hat{y}_t)_{t \in T}$.
      \item For every cycle $\gamma \in \Gamma$ such that $w(M_{LR}^\gamma) \neq w(M_{RL}^\gamma)$ it holds that:
      $$\argmax_{M \in \{ M_{LR}^\gamma, M_{RL}^\gamma \}} \{ w(M) \} \not\subseteq M_B.$$
  \end{enumerate}
\end{lemma}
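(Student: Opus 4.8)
The plan is to prove the two implications separately, in both cases by contraposition, viewing a consistent b-matching as an integral transportation plan on the complete bipartite graph $(L,R)=(T,T)$ where each $t\in L$ carries supply $\hat y_t^L$ and each $t\in R$ carries demand $\hat y_t^R$. The single operation I would rely on throughout is a \emph{cycle shift} along some $\gamma\in\Gamma$: decrease $\hat x_e$ by one on every edge of one of the two matchings $M_{LR}^\gamma,M_{RL}^\gamma$ and increase $\hat x_e$ by one on every edge of the other. Since each vertex of $\gamma$ meets exactly one edge of each matching, such a shift leaves the vertex totals unchanged, so it preserves the b-matching constraints~\eqref{eq:b-matching-l} and~\eqref{eq:b-matching-r} and keeps the result consistent with $(\hat y_t)_{t\in T}$; its effect on the objective is exactly the difference of the two matching weights $w(M_{RL}^\gamma)-w(M_{LR}^\gamma)$.

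For the direction $(1)\Rightarrow(2)$ I would argue by contraposition. Suppose some $\gamma\in\Gamma$ with $w(M_{LR}^\gamma)\neq w(M_{RL}^\gamma)$ has its more expensive matching $\argmax_M\{w(M)\}$ contained in $M_B$, meaning every edge of that matching has $\hat x_e\ge 1$. Then I would apply the cycle shift that decreases the expensive matching and increases the cheap one. Non-negativity is preserved because only edges with value at least one are decreased, and each edge value remains at most $k$ since the vertex totals are unchanged; hence the outcome is again a consistent b-matching, but with strictly smaller weight. This shows $B$ is not of minimum weight, contradicting $(1)$.

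The substantial direction is $(2)\Rightarrow(1)$, again by contraposition. Assuming $B$ is not of minimum weight, I would fix a consistent b-matching $B'=(\hat x'_e)_e$ with $w(B')<w(B)$, form the signed difference $\delta_e=\hat x'_e-\hat x_e$, and read it as a directed multigraph $H$ on $L\cup R$: an edge $(l,r)$ with $\delta_{(l,r)}>0$ contributes $\delta_{(l,r)}$ arcs $l\to r$, and one with $\delta_{(l,r)}<0$ contributes $|\delta_{(l,r)}|$ arcs $r\to l$. Because $B$ and $B'$ satisfy the same equalities~\eqref{eq:b-matching-l} and~\eqref{eq:b-matching-r}, the net flow out of every vertex is zero, so $H$ is balanced and its arc multiset splits into simple directed cycles. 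Being bipartite, each such cycle alternates between $L$ and $R$; its forward arcs (where $\hat x'_e>\hat x_e$) and its backward arcs (where $\hat x'_e<\hat x_e$, so $\hat x_e\ge 1$ and the edge lies in $M_B$) are precisely the two matchings $M_{LR}^\gamma$ and $M_{RL}^\gamma$ of the associated $\gamma\in\Gamma$, up to which matching plays which role.

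Finally, summing the per-cycle weight changes recovers $w(B')-w(B)<0$, so at least one cycle $\gamma$ in the decomposition strictly lowers the weight when flow is pushed along it; for that $\gamma$ the decreased (backward) matching is the more expensive one, and by the observation above it lies in $M_B$. Thus $\argmax_M\{w(M)\}\subseteq M_B$ with $w(M_{LR}^\gamma)\neq w(M_{RL}^\gamma)$, which violates $(2)$, completing the contrapositive. The hard part will be exactly this decomposition step: one must verify that the difference of two consistent b-matchings is a balanced circulation, that it decomposes into \emph{simple} alternating cycles whose two orientations correspond to $M_{LR}^\gamma$ and $M_{RL}^\gamma$, and that the sign bookkeeping correctly identifies the decreased matching (forced into $M_B$ by $\hat x_e\ge 1$) with the argmax. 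The remaining feasibility checks (non-negativity and the upper bound $k$) are routine.
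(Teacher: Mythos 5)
Your proof is correct and follows the same two-contraposition skeleton as the paper's: a cycle shift to strictly improve $B$ when an expensive matching sits inside $M_B$ (for $(1)\Rightarrow(2)$), and a cycle decomposition of the difference between $B$ and a cheaper consistent $B'$ (for $(2)\Rightarrow(1)$). Two local choices differ, and both are defensible or even cleaner. First, in $(1)\Rightarrow(2)$ you bound the shifted edge values by $k$ via the preserved vertex totals (after establishing non-negativity, each new value is at most $\sum_{r} u_{(t,r)} = \hat{y}_t^L \le k$ by \eqref{eq:y-def}); the paper instead rules out the value $k+1$ by a small contradiction argument (if $\hat{x}_e = k$, then \eqref{eq:cardinality} and \eqref{eq:b-matching-l} force the b-matching to consist of the single edge $e$, whence $M_{LR}^\gamma = M_{RL}^\gamma$ and the weights could not differ). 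Second, and more substantively, in $(2)\Rightarrow(1)$ the paper works with the undirected symmetric difference $B \triangle B'$, shows every degree is even, covers it by cycles of $\Gamma$, and then asserts that within each cycle the edges with $\hat{x}_e > u_e$ and those with $\hat{x}_e < u_e$ are exactly the two matchings $M_{LR}^\gamma$ and $M_{RL}^\gamma$ --- a property that an arbitrary cycle cover of an even multigraph does not by itself guarantee, and which the paper leaves implicit. Your directed-circulation formulation (arcs $l \to r$ for increases, $r \to l$ for decreases) makes this alternation automatic: a simple directed cycle in a bipartite digraph must alternate orientations, so its forward and backward arcs are precisely the two matchings of the associated $\gamma \in \Gamma$, and degenerate two-edge cycles cannot arise because a single edge cannot carry both signs. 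So your route rigorously settles exactly the step you flag as the hard part, at no extra cost; the concluding weight bookkeeping, which locates a cycle whose decreased (hence contained in $M_B$) matching is the more expensive one, is identical in both proofs.
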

\begin{proof}
  $(\text{1.}\Rightarrow \text{2.})$ First we show that the first statement implies the second statement.
  By contraposition, let us assume that there exists a cycle $\gamma \in \Gamma$ such that $w(M_{LR}^\gamma) > w(M_{RL}^\gamma)$ (the other case is analogous) and $M_{LR}^\gamma \subseteq M_B$.
  We will show that the first statement of the lemma is false by constructing a new consistent b-matching with smaller weight.
  The idea is to remove the more expensive matching $M_{LR}^\gamma$ from $B$ and add the less expensive matching $M_{RL}^\gamma$.
  
  Let $M_{LR}^\gamma = (\alpha_e)_{e \in T \times T}$ and $M_{RL}^\gamma = (\beta_e)_{e \in T \times T}$, where $\alpha_e, \beta_e \in \{0,1\}$ are b-matching representations of matchings $M_{LR}^\gamma,M_{RL}^\gamma$.
  
  We define $B'$ as follows:
  $$B' = (\hat{x}_e - \alpha_e + \beta_e)_{e \in T \times T} = (u_e)_{e \in T \times T},$$
  and through the following three claims we will show that $B'$ is the desired b-matching.
  \begin{claim}
    $B'$ is a b-matching.
  \end{claim}
  \begin{proof}
    We need to show that
    $u_e \in \{0, 1, \dots, k\}$ for every $e \in T \times T$.
    
    The value $u_e = \hat{x}_e - \alpha_e + \beta_e$ is obviously an integer as $\hat{x}_e \in \{0, 1, \dots, k\}$ and $\alpha_e, \beta_e \in \{0,1\}$.
    
    We have $u_e \geq \hat{x}_e - \alpha_e \geq 0$, where the last inequality follows from the fact that $M_{LR}^\gamma \subseteq M_B$, hence $\hat{x}_e \geq 1$.
    
    We have $u_e \leq \hat{x}_e + \beta_e \leq k+1$.
    We will show that $\hat{x}_e + \beta_e = k+1$ implies a contradiction, hence $u_e \leq k$.
    
    Let us assume that $\hat{x}_e + \beta_e = k+1$.
    It means that $\hat{x}_e = k$.
    B-matching $B$ is consistent with $(\hat{y}_t)_{t \in T}$, so it satisfies b-matching constraints.
    Hence we have:
    \begin{align*}
      \sum_{t \in T} \sum_{r \in T} \hat{x}_{(t,r)} \stackrel{\eqref{eq:b-matching-l}}{=} \sum_{t \in T} y_t^L \stackrel{\eqref{eq:cardinality}}{=} k .
    \end{align*}
    It means that all other edge-variables except $\hat{x}_e$ have value $0$.    
    It implies that $M_B$ contains only one edge, i.e., $e$.
    Moreover, we know that $M_{LR}^\gamma \subseteq M_B$, so $|M_{LR}^\gamma| \leq 1$.
    A cycle $\gamma$ has length at least $2$, so $|M_{LR}^\gamma| \geq 1$.
    Finally, we can say that $M_{LR}^\gamma = \{e\}$.
    But then also $M_{RL}^\gamma = \{e\}$.
    Hence we have $w(M_{LR}^\gamma) = w(M_{RL}^\gamma)$ and this is in contradiction with the assumption that weights of both matchings are different.
  \end{proof}
  \begin{claim}
    $B'$ is consistent with $(\hat{y}_t)_{t \in T}$.
  \end{claim}
  \begin{proof}
    We need to show that both b-matching constraints, i.e. Constraints~\eqref{eq:b-matching-l}--\eqref{eq:b-matching-r}, are satisfied by $B'$.
    We will prove it for Constraint~\eqref{eq:b-matching-l} and the proof for the other constraint is analogous.
    
    First, we notice that for every $t \in L$ the number of edges from $M_{LR}^\gamma$ which are incident to $t$ is the same as the number of edges from $M_{RL}^\gamma$ which are incident to $t$ (this number is equal to $0$ or $1$).  
    Hence we have:
    \begin{align}
      \sum_{r \in T} (\beta_{(t,r)} - \alpha_{(t,r)}) = 0.\label{eq:alpha-beta-0}
    \end{align}
    Then, for every $t \in T$ we have:
    \begin{align*}
      \sum_{r \in T} u_{(t,r)} 
      = \sum_{r \in T} \hat{x}_{(t,r)} + \sum_{r \in T} (\beta_{(t,r)} - \alpha_{(t,r)})
      \stackrel{\eqref{eq:b-matching-l},\eqref{eq:alpha-beta-0}}{=} \hat{y}_t^L.
    \end{align*}
  \end{proof}

  \noindent
  We define the weight of a b-matching $A$ as:
  $$w(A) = \sum_{e \in T \times T} A_e \cdot w_e.$$
  \begin{claim}
    $B'$ has strictly smaller weight than $B$.
  \end{claim}
  \begin{proof}
    The weight of $B'$ is upperbounded as follows:
    \begin{align*}
      w(B') &= \sum_{e \in T \times T} u_e \cdot w_e\\
      &= \sum_{e \in T \times T} \hat{x}_e \cdot w_e - \sum_{e \in T \times T} \alpha_e \cdot w_e + \sum_{e \in T \times T} \beta_e \cdot w_e\\
      &= w(B) - w(M_{LR}^\gamma) + w(M_{RL}^\gamma) < w(B),
    \end{align*}
    where the last inequality follows from the assumption that $w(M_{LR}^\gamma) > w(M_{RL}^\gamma)$.
  \end{proof}
  
  $(\text{2.}\Rightarrow \text{1.})$ Next, we show that the second statement implies the first one.
  By contraposition, let us assume that the first statement is false, i.e., there exists a b-matching $B' = (u_e)_{e \in T \times T}$ consistent with $(\hat{y}_t)_{t \in T}$ such that $w(B') < w(B)$.
  We will construct a cycle which shows that the second statement of the lemma is false.
  
  As both b-matchings $B$ and $B'$ are consistent with $(\hat{y}_t)_{t \in T}$, every vertex of $(L,R)$ has the same number of incident edges in both b-matchings.
  Formally, for every $t \in T$ we have:
  \begin{align*}
    \sum_{r \in T} \hat{x}_{(t,r)} \stackrel{\eqref{eq:b-matching-l}}{=} y_t^L \stackrel{\eqref{eq:b-matching-l}}{=} \sum_{r \in T} u_{(t,r)},\\
    \sum_{r \in T} \hat{x}_{(r,t)} \stackrel{\eqref{eq:b-matching-r}}{=} y_t^R \stackrel{\eqref{eq:b-matching-r}}{=} \sum_{r \in T} u_{(r,t)}.
  \end{align*}
  
  We define two (multi)graphs as follows (for each edge, the value in a vector defines multiplicity of an edge):
  $$(B-B')^+ = ((\hat{x}_e-u_e)^+)_{e \in T \times T},$$
  $$(B'-B)^+ = ((u_e-\hat{x}_e)^+)_{e \in T \times T},$$
  where $x^+ = \max\{x,0\}$.
  The symmetric difference of $B$ and $B'$ is defined as
  $$B \triangle B' = ((\hat{x}_e-u_e)^+ + (u_e-\hat{x}_e)^+)_{e \in T \times T},$$

  $B \triangle B'$ is a (multi)graph on $(T,T)$ with the following property.
  \begin{claim}\label{claim:sym-diff-even}
    Every vertex of $B \triangle B'$ has even degree.
  \end{claim}
  \begin{proof}
    In a graph $B+B' = (\hat{x}_e+u_e)_{e \in T \times T}$, every vertex $t \in L$ has even degree $2y_t^L$ and every vertex $t \in R$ has even degree $2y_t^R$.
    We define $\min\{B,B'\} = (\min\{\hat{x}_e,u_e\})_{e \in T \times T}$.
    Obviously, in a graph $2 \cdot \min\{B,B'\}$ every vertex has even degree.
    Finally, we observe that $B \triangle B'$ is equal to $(B+B')-2\min\{B,B'\}$, so also in $B \triangle B'$ every vertex has even degree.
  \end{proof}
  Claim~\ref{claim:sym-diff-even} implies that $B \triangle B'$ can be covered by cycles from $\Gamma$ (possibly using a given cycle multiple times).
  Let us fix such a cycle cover $\Lambda$.
  
  For a cycle $\gamma \in \Lambda$ we define two matchings:
  $$\gamma_B = \{e \in \gamma: \hat{x}_e > u_e\} = \gamma \cap (B-B')^+,$$
  $$\gamma_{B'} = \{e \in \gamma: \hat{x}_e < u_e\} = \gamma \cap (B'-B)^+.$$
  Notice that $\{\gamma_B, \gamma_{B'}\} = \{M_{LR}^\gamma,M_{RL}^\gamma\}$ and $\gamma_B \neq \gamma_{B'}$.
  Intuitively, $\gamma_B$ is a matching included in b-matching $B$ and $\gamma_{B'}$ is a matching included in b-matching $B'$.
  
  We are ready to show that the second statement of the lemma is false.
  We have the following sequence of (in)equalities:
  \begin{align*}
    0 &< w(B) - w(B') = \sum_{e \in T \times T} (\hat{x}_e - u_e) \cdot w_e\\
    &= \sum_{\substack{e \in T \times T\\ \hat{x}_e > u_e}}
        (\hat{x}_e - u_e) \cdot w_e
      - \sum_{\substack{e \in T \times T\\ \hat{x}_e < u_e}}
        (u_e - \hat{x}_e) \cdot w_e\\
    &= \sum_{e \in (B-B')^+} w_e
      - \sum_{e \in (B'-B)^+} w_e\\
    &= \sum_{\gamma \in \Lambda}
      \bigg(\sum_{e \in \gamma \cap (B-B')^+} w_e
        - \sum_{e \in \gamma \cap (B'-B)^+} w_e\bigg)\\
    &= \sum_{\gamma \in \Lambda}
      \bigg(\sum_{e \in \gamma_B} w_e
        - \sum_{e \in \gamma_{B'}} w_e\bigg)\\
    &= \sum_{\gamma \in \Lambda} \big(w(\gamma_{B}) - w(\gamma_{B'})\big),
  \end{align*}
  where the third equality follows from the fact that every edge $e \in T \times T$ such that $\hat{x}_e > u_e$ appears $\hat{x}_e - u_e$ many times in $(B-B')^+$.
  Analogously in the case $\hat{x}_e < u_e$, an edge $e$ appears $u_e - \hat{x}_e$ many times in $(B'-B)^+$.
  The forth equality follows from the fact that graph $B \triangle B' = (B-B')^+ + (B'-B)^+$ is covered by cycles from $\Lambda$.
  
  Therefore, there exists $\gamma \in \Lambda$ such that:
  $$w(\gamma_{B}) > w(\gamma_{B'}).$$
  
  It means that $B$ uses a more expensive matching among two matchings contained in cycle $\gamma$.
  This shows that the second statement of the lemma is false.
  This finishes proof of the lemma.
\end{proof}

We define a constraint for every possible cycle $\gamma$ in $(T,T)$ which has the following meaning: if two matchings in cycle $\gamma$ have different weights, then a solution cannot use the more expensive one.
In this way, the output matching will be a minimum weight b-matching between chosen types of candidates (defined by type-variables).

In order to define the constraints, we introduce binary \emph{restriction-variables} $z_e$ which inform us if an edge $e$ is used in a b-matching or not:
\begin{align}
  z_e &\in \{0,1\} &\forall e \in T \times T, \label{eq:ze-def}\\
  z_e &\leq x_e &\forall e \in T \times T, \label{eq:ze-leq-xe}\\
  z_e &\geq \frac{1}{k} \cdot x_e &\forall e \in T \times T. \label{eq:ze-geq-1kxe}
\end{align}
There are $4^n$ such restriction-variables.
We introduced $2 \cdot 4^n$ many new constraints.
The behavior of restriction-variables is exactly as intended:
\begin{lemma}\label{lem:z-x}
  For every $e \in T \times T$ it holds:
  \begin{align*}
    z_e = 1 \iff x_e \geq 1,\\
    z_e = 0 \iff x_e = 0.
  \end{align*}
\end{lemma}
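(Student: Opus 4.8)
The plan is to read off the value of $z_e$ directly from the three constraints \eqref{eq:ze-def}, \eqref{eq:ze-leq-xe}, and \eqref{eq:ze-geq-1kxe} that define the restriction-variables, together with the integrality $x_e \in \{0,1,\dots,k\}$ from \eqref{eq:xe-def}. Since $z_e$ is binary and $x_e$ is a nonnegative integer, the two displayed biconditionals are logically equivalent: each is the contrapositive of the other, using $z_e = 1 \iff z_e \neq 0$ and $x_e \geq 1 \iff x_e \neq 0$. Hence it suffices to analyze the two cases $x_e = 0$ and $x_e \geq 1$ separately.

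First I would treat $x_e = 0$. Constraint \eqref{eq:ze-leq-xe} then gives $z_e \leq x_e = 0$, and since $z_e \in \{0,1\}$ by \eqref{eq:ze-def}, this forces $z_e = 0$. Next I would treat $x_e \geq 1$. Here constraint \eqref{eq:ze-geq-1kxe} gives $z_e \geq \frac{1}{k}\,x_e \geq \frac{1}{k} > 0$, where we use $k \geq 1$; combined with $z_e \in \{0,1\}$ this forces $z_e = 1$. Putting the two cases together yields $z_e = 0 \iff x_e = 0$ and $z_e = 1 \iff x_e \geq 1$, which is exactly the claim.

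There is no genuine obstacle here; the only point worth flagging is that the argument leans on two facts that are easy to overlook. The upper bound \eqref{eq:ze-leq-xe} is what pins $z_e$ to $0$ when the edge is unused, while the lower bound \eqref{eq:ze-geq-1kxe} together with $k \geq 1$ (so that $\frac{1}{k} > 0$) is what prevents $z_e$ from staying $0$ once $x_e \geq 1$; the binary restriction \eqref{eq:ze-def} then rounds the fractional quantity $\frac{1}{k}x_e$ up to exactly $1$. Without integrality of $z_e$, the lower-bound constraint alone would only guarantee $z_e \geq \frac{1}{k}x_e$ rather than $z_e = 1$, so the interplay between the two bounds and the domain $\{0,1\}$ is the crux of this otherwise routine verification.
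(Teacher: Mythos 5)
Your proof is correct and uses essentially the same argument as the paper: constraint~\eqref{eq:ze-leq-xe} pins $z_e$ to $0$ when $x_e=0$, and constraint~\eqref{eq:ze-geq-1kxe} together with the binary domain~\eqref{eq:ze-def} forces $z_e=1$ when $x_e\geq 1$. The only cosmetic difference is that you prove two implications and obtain the other two by noting the biconditionals are contrapositives over the integer domains, whereas the paper spells out all four implications directly.
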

\begin{proof}
  If $z_e=1$ then Inequality~\eqref{eq:ze-leq-xe} implies $x_e \geq 1$.
  If $x_e \geq 1$ then Inequality~\eqref{eq:ze-geq-1kxe} implies $z_e \geq \frac{1}{k}$, but $z_e$ is a binary variable, hence $z_e = 1$.
  If $z_e=0$ then Inequality~\eqref{eq:ze-geq-1kxe} implies $x_e = 0$.
  If $x_e = 0$ then Inequality~\eqref{eq:ze-leq-xe} implies $z_e = 0$.
\end{proof}

Now we are ready to define constraints which forbid using more expensive matching if there is a cheaper one.
Let $\Gamma_1 = \{ \gamma \in \Gamma: w(M_{LR}^\gamma) > w(M_{RL}^\gamma) \}$ and $\Gamma_2 = \{ \gamma \in \Gamma: w(M_{LR}^\gamma) < w(M_{RL}^\gamma) \}$, i.e., they contain cycles with different matching weights.
\begin{align}
  \sum_{e \in M_{LR}^\gamma} z_e \leq |M_{LR}^\gamma|-1 \quad\quad \forall \gamma \in \Gamma_1, \label{eq:cycle-restr-lr}\\
  \sum_{e \in M_{RL}^\gamma} z_e \leq |M_{RL}^\gamma|-1 \quad\quad \forall \gamma \in \Gamma_2. \label{eq:cycle-restr-rl}
\end{align}
There are at most $|\Gamma|$ many such constraints, i.e., at most $2^{5^n}$ (due to Lemma~\ref{lem:Gamma}).

Due to the new constraints, at least one variable $z_e, e \in M_{LR}^\gamma$ is equal to $0$ (in the case of $\gamma \in \Gamma_1$; the case $\gamma \in \Gamma_2$ is analogous).
Therefore, due to Lemma~\ref{lem:z-x}, at least one variable $x_e, e \in M_{LR}^\gamma$ is equal to $0$, so at least one edge $e$ from a more expensive matching is not taken to a solution.
Consequently, such more expensive matching is excluded from a set of feasible solutions.
This finishes the description of our ILP.

\citet{lenstra1983integer} showed that an ILP can be solved using $O(p^{2.5p + o(p)} \cdot |I|)$ arithmetic operations, where $|I|$ is the input size and $p$ is the number of integer variables.

In our ILP we have:
$2^{n+1}$ type-variables (Equation~\eqref{eq:y-def});
$4^n$ edge-variables (Equation~\eqref{eq:xe-def});
and $4^n$ restriction-variables (Equation~\eqref{eq:ze-def}).
Hence, in total, there are $O(4^n)$ many variables.

In our ILP we have:
$2$ cardinality constraints (Equation~\eqref{eq:cardinality});
$2^{n+1}$ b-matching constraints (Equations~\eqref{eq:b-matching-l} and~\eqref{eq:b-matching-r});
$2 \cdot 4^n$ constraints from Inequalities~\eqref{eq:ze-leq-xe} and~\eqref{eq:ze-geq-1kxe}; and
$2^{5^n}$ constraints from Inequalities~\eqref{eq:cycle-restr-lr} and~\eqref{eq:cycle-restr-rl}.
Hence, in total, there are $O(2^{5^n})$ many constraints.

Therefore, using the result of \citet{lenstra1983integer} we can find a solution to FC in time upperbounded by:
$$(O(4^n))^{O(4^n)} \cdot O(2^{5^n}) \leq 2^{O(n \cdot 4^n)} \cdot 2^{5^n} \leq 2^{O(5^n)}.$$
This finishes the proof of the theorem.

\end{document}